\newcounter{hposcnt}
\renewcommand*{\thehposcnt}{hpos\number\value{hposcnt}}
\NewDocumentCommand{\lplabel}{o m}{%
  \stepcounter{hposcnt}%
  \zsaveposx{\thehposcnt l}%
  \zref@refused{\thehposcnt l}%
  \zref@refused{hpos0l}%
  \makebox[0pt][r]{\makebox[\dimexpr\zposx{\thehposcnt l}sp-\zposx{hpos0l}sp][l]{#2}}%
  \IfNoValueF{#1}
    {\def\@currentlabel{#2}\ltx@label{#1}}
}
\pgfplotsset{compat=newest} 
\newcommand{\prob}{\ensuremath{\mathbb{P}}}
\newcommand{\expec}{\ensuremath{\mathbb{E}}}
\newcommand{\ind}{\ensuremath{\mathbf{I}}}
\newcommand{\reals}{\ensuremath{\mathbb{R}}}
\newcommand{\defeq}{\ensuremath{\triangleq}}
\newcommand{\sdleq}{\ensuremath{\preccurlyeq_{\mathsf{sd}}}}
\newcommand{\setS}{\ensuremath{\mathcal{S}}}
\newcommand{\argmax}{\ensuremath{\arg\max}}
\newcommand{\imix}{\ensuremath{\mathcal{I}_{\text{mix}}}}
\newcommand{\izero}{\ensuremath{\mathcal{I}_{0}}}
\newcommand{\ione}{\ensuremath{\mathcal{I}_{1}}}
\newcommand{\uihat}{\ensuremath{\overline{i}}}
\newcommand{\lihat}{\ensuremath{\underline{i}}}
\newtheorem{theorem}{Theorem}
\newtheorem{proposition}{Proposition}
\newtheorem{lemma}{Lemma}
\newtheorem{definition}{Definition}
\newtheorem{assumption}{Assumption}
\title{Information Design in Spatial Resource Competition}
\date{}
\author[1]{\small Pu Yang}
\author[2]{\small Krishnamurthy Iyer}
\author[1]{\small Peter Frazier}
\affil[1]{\small Cornell University}
\affil[2]{\small University of Minnesota}
\begin{document}
\maketitle

\begin{abstract}

  We consider the information design problem in spatial resource
  competition settings. Agents gather at a location deciding whether
  to move to another location for possibly higher level of resources,
  and the utility each agent gets by moving to the other location
  decreases as more agents move there. The agents do not observe the
  resource level at the other location while a principal does and the
  principal would like to carefully release this information to
  attract a proper number of agents to move. We adopt the Bayesian
  persuasion framework and analyze the principal's optimal signaling
  mechanism design problem. We study both private and public signaling
  mechanisms. For private signaling, we show the optimal mechanism can
  be computed in polynomial time with respect to the number of
  agents. Obtaining the optimal private mechanism involves two steps:
  first, solve a linear program to get the marginal probability each
  agent should be recommended to move; second, sample the moving
  agents satisfying the marginal probabilities with a sequential
  sampling procedure.  For public signaling, we show the sender
  preferred equilibrium has a simple threshold structure and the
  optimal public mechanism with respect to the sender preferred
  equilibrium can be computed in polynomial time. We support our
  analytical results with numerical computations that show the optimal
  private and public signaling mechanisms achieve substantially higher
  social welfare compared with no information or full information
  benchmarks in many settings.

\end{abstract}

\section{Introduction}

We study information design in the spatial resource competition settings, where a group of agents migrate across a network of locations, competing for stochastic time-varying resources at each location. This setting characterizes many real world scenarios: on crowd-sourced transportation platforms, drivers migrate across different neighborhoods of a city, competing for ride demands; in unskilled labor markets, workers migrate across different cities for more job opportunities; in nomadic animal husbandry, pastoralists migrate across different range lands, competing for water and pastures, etc.

In these scenarios, information about other locations largely affects an agent's decision about whether to leave her current location, and where to explore. However, such information is usually limited or even unavailable to the agents. Meanwhile, in many cases, there is a principal who has access to more information than any individual agent, e.g., platforms such as Uber and Lyft in the ride sharing market, or the government agencies in labor markets, or non-profits in the case of pastoralists. Using this information, the principal can influence the decisions of the migrating agents, better locate them and improve the total social welfare. For example, Uber and Lyft show the demand trend at different neighborhoods as a heatmap to the drivers \cite{ubersurge} in order to locate them to areas with higher demands, and government agencies provide information about employment and job openings in various sectors.  

However, sharing information may not always bring higher welfare, if the information is not released in a careful way: if a sports match or a concert at a stadium is about to finish and Uber informs every available driver in the nearby area, too many drivers may flood to this area, with many of them failing to find a customer due to over-supply. In such cases, it might be wiser for the platform to only inform a particular group of drivers to better match supply and demand. How to choose such a group? More generally, spatial resource competition scenarios typically exhibit negative externalities. In presence of such negative externalities, how should a principal effectively communicate her information to the agents in order to better position them?

Answering this question for a large network of locations is challenging: the effect of a signal that attracts a group of agents to a particular location may percolate across the entire network through agents' subsequent migration. On the other hand, signals can take very complicated forms in large networks, depending on many factors, including the number of locations and agents, the dynamics at each location, and the agents' belief about the state of the system and other agents' strategies. With the goal of obtaining insights to this spatial information design problem while retaining a tractable analysis, we consider a two-location model, which serves as a foundation for more complicated analysis.

\subsection{Overview of Model and Main Results}

We consider a model with 2 locations, $\ell_0$, $\ell_1$, and $N$ agents. Initially, all the agents are at $\ell_0$, and must decide whether to move to $\ell_1$, which has a stochastic resource. The utility each agent receives upon moving to $\ell_1$ depends on the stochastic resource level at $\ell_1$, as well as how many other agents move there. The agents do not know the resource level at $\ell_1$ while a principal can observe it. The principal would like to design a mechanism to share this information to the agents in order to attract a proper number of agents to move to $\ell_1$. 

We adopt the framework of Bayesian persuasion \cite{kamenica2011bayesian, rayo2010optimal} to study this information design problem. The principal's goal is to choose a signaling mechanism that maximizes the expected social welfare, i.e., the total expected utility of the agents.  In this work, we consider both private signaling mechanisms, where the principal sends personalized signals to each agent privately, as well as public signaling mechanisms, where the principal sends the same information to all the agents.

The standard approach using a revelation-principle style argument \cite{bergemann2019information} to find the optimal private signaling mechanism leads to a linear program in $2^N$ variables, rendering the computation challenging. Instead, our first main result in Section~\ref{sec:private-signal} characterizes a computationally efficient two-step approach to find the optimal private signaling mechanism. First, we perform a change-of-variables and instead of the signaling mechanism, we focus on the marginal probabilities $p_{ik}$ that an agent $i$ is recommended to move to $\ell_1$ along with $k-1$ other agents, for each $i$ and $k$. We show that the marginal probabilities $\{ p_{ik}\}_{i,k}$ corresponding to the optimal private signaling mechanism can be found by solving a linear program in $O(N^2)$ variables. Then, we describe an efficient sampling procedure that samples sets of agents according to the the optimal marginal probabilities $\{p_{ik}\}_{i,k}$. The optimal private signaling mechanism then asks the sampled set of agents to move to $\ell_1$ and the rest to stay at $\ell_0$. Finally, we provide a condition on the model parameters under which the optimal signaling mechanism has a simple threshold structure and can be computed in $O(\log N)$ time. 


Although private signaling mechanisms provide the principal more flexibility, a number of practical concerns often render private mechanisms infeasible \cite{dughmi2017algorithmic, lingenbrink2018retail, candogan2017optimal}. First, private mechanisms make the strong assumption of no "information leakage" among the agents, i.e., the agents do not share their personalized information with each other. This assumption may easily fail in practice. Furthermore, fairness considerations may prevent a principal from sharing different information with different agents; a fair-minded principal may even seek to avoid the semblance of providing conflicting information to different agents. 

Owing to these reasons, in Section~\ref{sec:public-signal}, we analyze the problem of finding the optimal public signaling mechanism, where the principal shares the same information with all the agents. To do this, we first characterize the equilibria of the incomplete information game among the agents subsequent to receiving any public signal. While the equilibrium set is quite complex, we show that for any common posterior belief of the agents, the equilibrium that maximizes the social welfare has a simple threshold structure. Using this result, we show that the optimal public signaling mechanism can be found as a solution to a linear program with $O(N)$ variables and constraints. Furthermore, we show that the optimal mechanism only randomizes over two signals.


Finally, we numerically investigate the performance of the optimal private and public signaling mechanisms, and show that these achieve substantially higher social welfare than the no-information or full-information mechanisms. 

The main point of departure of our work from past literature on information design is the modeling of negative externality among the agents. Past work on information design has focused mainly on settings with no externalities \cite{dughmi2017algorithmic} or settings where there is positive externalities among the agents \cite{candogan2017optimal, candogan2019persuasion}. In such settings, correlation among the agents' choice of action is beneficial, whereas the main challenge in our work is in de-correlating the agents' actions. This is especially challenging under public signaling, where public signals naturally tend to correlate the agents' actions. 

To conclude this section, we note that our approach for finding the optimal private signaling mechanism is not restricted to our model, but applies more broadly to settings where an agents' utility upon taking the action depends on the number of agents who take that action. Thus, our results on the computation of the optimal private signaling mechanism might be of independent interest to the research community.

\subsection{Related Work}

Our work focuses on information design in a resource competition scenario, adopting the Bayesian persuasion framework. We briefly survey here the related works on information design and Bayesian persuasion.

Information design problems on how a sender should persuade one or more receivers date back to \cite{grossman1981informational} and \cite{milgrom1981good}. The two mainstream frameworks studying this problem is the Bayesian persuasion framework, originating from \cite{kamenica2011bayesian, rayo2010optimal} and the ``cheap talk'' model \cite{crawford1982strategic,farrell1987cheap}, where the main difference is the former assumes the signal sender has the power to commit to a particular information sharing mechanism. 

\cite{kamenica2011bayesian} studies the basic setting with one sender and one receiver. \cite{taneva2015information,bergemann2016bayes,mathevet2017information} considers the more general ``one sender, many receivers'' setting and provide unified frameworks of deriving the optimal mechanism in such general settings. \cite{gentzkow2017bayesian} considers a setting where multiple senders wish to influence one receiver. 

For general ``one sender, many receivers'' information sharing settings, tractable computation of the optimal mechanism remains an open problem. A lot of works have studied various special cases. \cite{babichenko2017algorithmic, arieli2019private, dughmi2017algorithmic} considered the simplest scenario: each receiver's action imposes no externalities on other agents. \cite{babichenko2016computational, babichenko2017algorithmic, arieli2019private} characterize polynomial time computable optimal mechanisms when the sender's utility is supermodular or anonymous submodular. \cite{dughmi2017algorithmic} provide an $(1-1/e)$ optimal mechanism for general submodular sender utilities. For more general settings where agents' payoffs depend on other agents' actions,
 \cite{arieli2019private}
provides polynomial time computable optimal mechanism for binary action settings when the sender's utility is supermodular, and similar result is given in \cite{candogan2017optimal, candogan2019persuasion}. Both works point out that the optimal policy correlates recommendations to take the positive action as much as possible. Our work considers resource competition in settings where agents' actions have negative externalities, and to the best of our knowledge, is the first to study such settings. 

Information design and Bayesian persuasion are studied in many other settings, including voting \cite{wang2013bayesian,alonso2016persuading,bardhi2018modes}, ad auction \cite{badanidiyuru2018targeting}, 
online retailing \cite{lingenbrink2018retail, drakopoulos2018persuading}, bilateral trade \cite{bergemann2015limits}, advertising \cite{chakraborty2014persuasive}, security games \cite{xu2015exploring}, customer queues with delays \cite{lingenbrink2017queue}, Stackelberg competition between firms \cite{xu2016signaling} and team formation \cite{hssaine2018information}, etc. A more thorough  review of this topic can be found in \cite{dughmi2017survey}.
Beyond the one-time persuasion setting considered in most of the work we listed, several other works 
\cite{caillaud2007consensus,kremer2014implementing} considered the problem of sequentially persuading a group of agents.

\section{Model and Preliminaries}
\label{sec:model-signal}

\subsection{Model}

We consider a model with $N$ agents, a principal, and $2$ locations,
denoted by $\ell_0$ and $\ell_1$. Initially, all the agents are at
location $\ell_0$. There is a stochastic resource at $\ell_1$, with
the resource level denoted by $\theta$. We primarily focus on the
binary setting, with $\theta \in \Theta \defeq \{0, 1\}$ capturing the
presence or absence of a resource; our model can be extended to the
case where $\Theta$ is a finite set. Without observing $\theta$, each
agent at $\ell_0$ independently decides whether or not to move to
$\ell_1$, where she obtains a utility that depends on $\theta$, as
well as the number of other agents who also choose to move to
$\ell_1$. In addition, we assume that each agent incurs a moving cost
if she moves to $\ell_1$.

Formally, each agent $i\in [N]$ simultaneously chooses an action
$a_i \in \{0, 1\}$, where $a_i=0$ implies the agent chooses to stay at
$\ell_0$ and $a_i = 1$ implies she chooses to move to $\ell_1$. Let
$a = (a_i, a_{-i})$ denote the profile of actions chosen by all the
agents, and $A \defeq \{0,1\}^N$ denote the set of action
profiles. Note that for any $a \in A$, the number of agents that
choose to move to $\ell_1$ is given by $n(a) = \sum_{i=1}^N
a_i$. Then, for any action profile $a$ and any resource level
$\theta$, an agent $i$'s utility, $U_i : \Theta \times A \to \reals$,
is given by
\begin{align*}
  U_i(\theta, a) =
  \begin{cases}
    \theta \cdot F\left(\sum_{j=1}^N a_j \right) - r(i), & \text{if $a_i=1$;}\\
    0, & \text{if $a_i = 0$.}
  \end{cases}
\end{align*}
Here, $F: [N] \to \reals_+$ is the resource sharing function that
determines how the resource at $\ell_1$ is shared among the agents at
$\ell_1$, and $r(i)$ denotes agent $i$'s moving cost. In particular,
an agent $i$ who chooses to move to $\ell_1$ receives an utility of
$\theta F(n(a))$ from the resource, and incurs a cost $r(i)$ for
moving. Furthermore, we have normalized the utility of staying at
$\ell_0$ to be zero. Without loss of generality, we assume that $r(i)$
is increasing in $i$, i.e., $r(1) \leq r(2) \leq \cdots r(N)$. For
notational convenience, in the following, we let $r(0) = 0$ and
$F(0) = F(1)$.


We make the following assumptions on the resource sharing function
$F$:
\begin{assumption} The resource sharing function $F(n)$ is decreasing
  and convex in $n \geq 1$.  Furthermore, the total utility from the
  resource $nF(n)$ is increasing and concave in $n \geq 0$.
\end{assumption}
The first condition is meant to capture the fact that in most resource
sharing settings, the amount of resource each agent receives decreases
as competition increases, with the decrement diminishing with the
level of competition due to market saturation. On the other hand, the
second condition captures the fact that as the competition increases,
the total level of resource available to the agents also increases,
albeit also at a diminishing rate. This is especially common
in platform markets, where the presence of more agents on one side leads to
better service quality for the other side of the market, leading to
more conversion.


\subsection{Information structure}

We assume that the principal and the agents hold a common prior belief
$\mu \in \Delta(\Theta)$ about the resource level $\theta$, where
$\mu(1)$ denotes the prior probability that $\theta=1$, and
$\mu(0) = 1 - \mu(1)$. While the resource level $\theta$ is unobserved
by the agents, we assume that the principal observes $\theta$ prior to
the agents' choice of actions. The principal's goal is to communicate
this information about $\theta$ to the agents prior to their moving
decision, in order to better position them at the two locations. (We
describe the principal's objectives in more detail below.)

Following the methodology of Bayesian persuasion, the principal
commits to a {\em signaling mechanism} as a means to share information
with the agents. Formally, a signaling mechanism $(\Sigma, \phi)$
consists of a signal set $\Sigma$ and a signaling scheme
$\phi: \Theta \to \Delta(\Sigma^N)$. Given a signaling mechanism
$(\Sigma, \phi)$ and upon observing the resource level $\theta$, the
principal first chooses a signal profile
$s = (s_1, \cdots, s_N) \in \Sigma^N$ with probability
$\phi(s|\theta) \in [0,1]$.  Then, the principal (privately) reveals
the signal $s_i$ to agent $i$. In particular, $s_i$ is not revealed to
an agent $j \neq i$.

Note that $\phi(s|\theta)$ denotes the conditional probability that
the signal profile is $s$, given the resource level is $\theta$. In
particular, we have $\sum_{s \in \Sigma^N} \phi(s|\theta) = 1$ for
each $\theta \in \Theta$.  Analogously, we define
$\phi(\theta, s) = \mu(\theta) \phi(s | \theta)$ to be the unconditional
joint probability that the signal profile is $s$ and the resource
level is $\theta$. Finally, we let $\phi(s)$ denote the probability of
the principal selecting the signal profile $s \in \Sigma^N$, given by
$\phi(s) = \phi(0,s) + \phi(1,s)$.

A special case of a signaling mechanism is a {\em public} signaling
mechanism, where the principal publicly announces the information
about $\theta$ to all the agents. In other words, the principal always
shares the same information with all the agents. Such public signaling
can be captured by a signaling mechanism $(\Sigma, \phi)$ where
$\phi(s|\theta) =0$ for any $s \in \Sigma^N$ with $s_i \neq s_j$ for
some $i,j \in [N]$. Finally, we refer to any signaling mechanism that
is not public as a private signaling mechanism.



\subsection{Strategies and equilibrium}
\label{sec:actions-equilibrium}

Since an agent $i$ does not have access to the signals of the other
agents, she maintains a belief over both the resource level $\theta$
and the signals $s_{-i}$ of the other agents. Upon receiving her
signal $s_i$ from the principal, the agent updates her belief using
Bayes' rule, before deciding whether to move.  Let $q_i(\cdot | s_i)$
denote posterior belief of agent $i$ about the resource level and the
signals sent to other agents, given by
\begin{align}
\label{eq:bayes-update}
  q_i(\theta, s_{-i} | s_i) = \frac{\mu(\theta)\phi(s_i, s_{-i} | \theta)}{\sum_{\theta', s'_{-i}} \mu(\theta')\phi(s_i, s'_{-i} | \theta')}.
\end{align}
A strategy $p_i : \Sigma \to [0,1]$ for agent $i$ specifies, for each
possible signal $s_i$, the probability $p_i(s_i)$ with which she
decides to move to location $\ell_1$. Each agent, given her posterior
belief and the strategies $p_{-i}$ of the other agents, seeks to
choose a strategy $p_i$ that maximizes her expected utility. More
precisely, given a signaling mechanism $(\Sigma, \phi)$ and the
strategies $p_{-i}$ of the other agents, the expected utility of an
agent $i$ for moving to location $\ell_1$ upon receiving a signal
$s_i$ is given by
\begin{align*}
  u_i(s_i, \mathsf{move},  p_{-i}) = \expec_{q_i} [ U_i(\theta,a_i=1, a_{-i}) | a_{-i} \sim  p_{-i}(s_{-i})],
\end{align*}
Then, in a {\em Bayes-Nash equilibrium}, each agent $i$, upon
receiving a signal $s_i$, decides to move to $\ell_1$ if her expected
utility $u_i(s_i, \mathsf{move}, p_{-i})$ is positive. We have the
following formal definition:
\begin{definition}
  Given a signaling mechanism $(\Sigma, \phi)$, a strategy profile
  $(p_1, \ldots, p_N)$ forms a Bayes-Nash equilibrium (BNE), if for
  each $i \in [N]$ and $s_i \in \Sigma$,
  \begin{align*}
    p_i(s_i) &= \begin{cases} 1 & \text{if $u_i(s_i, \mathsf{move},
        p_{-i}) > 0$;}\\
      0 & \text{if $u_i(s_i, \mathsf{move},
        p_{-i}) < 0$.}
    \end{cases}
  \end{align*}
  (If $u_i(s_i,\mathsf{move}, p_{-i}) = 0$, then
  $p_i(s_i) \in [0,1]$.)
\end{definition}


As mentioned earlier, the principal's goal is to choose a signaling
mechanism to maximize the {\em expected social welfare}. Formally, the
social welfare $W : \Theta \times A \to \reals$ is defined as
\begin{align*}
  W(\theta, a) \defeq \theta \cdot n(a) \cdot  F(n(a))  - \sum_{j=1}^N a_j r(j).
\end{align*}
Here the first term denotes the total utility obtained by the
$n(a) = \sum_{j=1}^N a_j$ agents from the resource at $\ell_1$, and
the second term denotes the total moving costs incurred by the
agents. We assume that the principal knows the agents' moving
costs. Then, the principal's decision problem is to choose a signaling
mechanism $(\Sigma, \phi)$ such that in a resulting Bayes-Nash
equilibrium $(p_1, \cdots, p_N)$ among the agents, the expected social
welfare, given by
$\expec[ W(\theta, a) | a_i \sim p_i(s_i) , (\theta, s) \sim \phi]$,
is maximized. In the next section, we study the problem of computation
of the optimal signaling mechanism and characterize its
structure. Subsequently, in Section~\ref{sec:public-signal}, we
analyze the related problem of optimal public signaling, where the
principal is restricted to sharing information via public signaling
mechanisms.


\section{Private Signaling Mechanism}
\label{sec:private-signal}


From a standard revelation-principle style argument
\cite{kamenica2011bayesian,bergemann2019information}, there exists a
{\em straightforward} and {\em persuasive} signaling mechanism that optimizes the expected social welfare. In a straightforward mechanism, the principal makes an action recommendation to each agent, and if it is optimal for each agent to follow the recommendation (assuming all others do so), then the mechanism is said to be persuasive. Thus, to obtain an optimal private mechanism, it is sufficient to restrict our attention to persuasive straightforward mechanisms. This implies that it suffices for the principal to determine the subset of agents to recommend to move for each resource level. 

For $\theta=0,1$ and $S \subseteq [N]$, let $\phi(S |\theta)$ be the probability of recommending the set $S$ of agents to move to $\ell_1$,  given resource level $\theta$. We abuse the notation to let
 $$W(\theta, S) \defeq \theta |S|F(|S|) - \sum_{i \in S} r(i)$$
be the social welfare when resource level is $\theta$ and agents in $S$ move to $\ell_1$. The optimal signaling scheme $\phi$ is then obtained as solution to the following linear program:
\begin{align}
\max_{\phi} \quad &\sum_{\theta=0,1} \mu(\theta) \sum_{ S \subseteq [N]} \phi(S |\theta) W(\theta, S) \label{eq:old-objective} \\
s.t. \quad & \sum_{\theta=0,1} \mu(\theta) \sum_{ S: i\in S} \phi(S | \theta)(\theta F(|S|) - r(i)) \geq  0, \quad i \in [N],  \label{eq:move-persuasive} \\
\lplabel[eq:lp-private]{(LP.1)} & \sum_{\theta=0,1} \mu(\theta) \sum_{S: i\notin S} \phi(S | \theta)(\theta F(|S|+1) - r(i)) \leq  0, \quad i \in [N],  \label{eq:stay-persuasive} \\
& \sum_{S \subseteq [N]} \phi(S | \theta) = 1, \quad \theta=0,1,  \notag \\
& \phi(S | \theta) \geq 0, \quad  \theta=0,1;  S \subseteq [N]. \notag
\end{align}
The first two sets of constraints ensure $\phi$ is persuasive: the first constraint states that any agent $i$ who is recommended to move to $\ell_1$ must have non-negative utility for moving, whereas the second constraint states that any agent $i$ who is recommended to stay must have non-positive utility for moving. The other constraints ensure that $\phi(\cdot | \theta)$ is a valid probability distribution for both $\theta=0,1$. Note that this linear program has $O(2^N)$ variables, and is computationally challenging. As a first step towards simplifying the problem, we note the following lemma, which allows us to only consider mechanisms that recommend all agent to stay at $\ell_0$ when $\theta=0$.
\begin{lemma}
\label{lem:no-move-zero-theta}
For any persuasive mechanism, there exists another persuasive mechanism that recommends every agent to stay at $\ell_0$ when $\theta=0$, and achieves a higher social welfare than the original mechanism.
\end{lemma}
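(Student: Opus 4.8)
The plan is to start from an arbitrary persuasive mechanism $\phi$ and modify only its behavior on the event $\theta = 0$, leaving $\phi(\cdot \mid 1)$ untouched. Observe first that when $\theta = 0$, the social welfare of recommending a set $S$ is $W(0, S) = -\sum_{i \in S} r(i) \le 0$, with equality exactly when $S = \emptyset$; so recommending everyone to stay when $\theta = 0$ weakly increases the $\theta = 0$ contribution to the objective. The only thing to check is that this change preserves persuasiveness, i.e.\ that constraints~\eqref{eq:move-persuasive} and~\eqref{eq:stay-persuasive} still hold after we zero out $\phi(\cdot \mid 0)$ except for $\phi(\emptyset \mid 0) = 1$.

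Concretely, define $\phi'$ by $\phi'(S \mid 1) = \phi(S \mid 1)$ for all $S$, $\phi'(\emptyset \mid 0) = 1$, and $\phi'(S \mid 0) = 0$ for $S \ne \emptyset$. For the ``stay'' constraint~\eqref{eq:stay-persuasive} at agent $i$: the $\theta = 0$ summand in the original LHS is $\mu(0)\sum_{S : i \notin S}\phi(S\mid 0)\,(0 \cdot F(|S|+1) - r(i)) = -\mu(0)\,r(i)\sum_{S: i \notin S}\phi(S \mid 0) \le 0$, and under $\phi'$ this term becomes $-\mu(0)\,r(i)$ (since $i \notin \emptyset$), which is $\le 0$ as well; and since $r(i)\sum_{S: i\notin S}\phi(S\mid 0) \le r(i)$, the $\phi'$ term is \emph{more negative} (or equal), so the whole LHS can only decrease, and the constraint $\le 0$ is still satisfied. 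For the ``move'' constraint~\eqref{eq:move-persuasive} at agent $i$: the $\theta = 0$ summand is $\mu(0)\sum_{S: i\in S}\phi(S\mid 0)(0 - r(i)) = -\mu(0)\,r(i)\sum_{S: i \in S}\phi(S\mid 0) \le 0$, and under $\phi'$ this summand becomes $0$ (since $i \in \emptyset$ never holds). Thus the $\theta = 0$ contribution to the move-constraint LHS increases (weakly) from a nonpositive value to $0$, so~\eqref{eq:move-persuasive} is preserved. The remaining constraints (that $\phi'(\cdot\mid\theta)$ is a probability distribution) hold by construction.

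It remains to compare objectives. Writing the objective~\eqref{eq:old-objective} as $\mu(1)\sum_S \phi(S\mid 1)W(1,S) + \mu(0)\sum_S \phi(S\mid 0)W(0,S)$, the first term is unchanged, and the second term changes from $\mu(0)\sum_S \phi(S\mid 0)\bigl(-\sum_{i\in S}r(i)\bigr) \le 0$ to $\mu(0)\,W(0,\emptyset) = 0$. Hence the objective weakly increases, which gives the claim. (If one wants strict improvement whenever the original mechanism ever recommends a nonempty set under $\theta = 0$ with positive probability and some $r(i) > 0$, that follows from the same computation; the statement as written only asserts ``higher'' in the weak sense needed for the reduction.)

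I do not expect a serious obstacle here; the one point requiring a moment's care is the direction of the inequality in the ``stay'' constraint~\eqref{eq:stay-persuasive}, where replacing a partial mass $\sum_{S: i\notin S}\phi(S\mid 0) \le 1$ by full mass $1$ makes the (negative) term $-\mu(0)r(i)(\cdot)$ smaller, so one must confirm this does not push the LHS above $0$ — but since the term was already $\le 0$ and only becomes more negative, the constraint is safe. The ``move'' constraint is the easy direction since its $\theta=0$ contribution simply goes from nonpositive to zero.
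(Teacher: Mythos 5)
Your proposal is correct and follows essentially the same route as the paper's proof: replace $\phi(\cdot\mid 0)$ by the point mass on $\emptyset$ while keeping $\phi(\cdot\mid 1)$, note the $\theta=0$ welfare term rises from a nonpositive value to zero, and verify the move constraint's $\theta=0$ contribution goes from nonpositive to zero while the stay constraint's contribution only becomes more negative (via $\sum_{S:i\notin S}\phi(S\mid 0)\le 1$). No gaps; your care about the direction of the stay-constraint inequality is exactly the point the paper's proof also handles.
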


We provide proofs of all results in Appendix~\ref{ap:proofs}. Although Lemma \ref{lem:no-move-zero-theta} reduces the size of \ref{eq:lp-private} by half, this linear program is still computationally challenging. However, taking a closer look at each agent's utility function, we notice that each agent $i$'s payoff for moving when $\theta=1$ depends only on how many other agents are moving. With this observation, we now consider an alternative formulation of \ref{eq:lp-private}.

Given a persuasive signaling scheme $\phi$ satisfying Lemma~\ref{lem:no-move-zero-theta}, i.e., $\phi(\varnothing | 0) = 1$, we define $p_{ik}$ to be the joint probability,  given $\theta=1$, that this signaling scheme recommends $k$ agents to move and agent $i$ is among them, i.e., 
\begin{equation}
\label{eq:pik-def}
p_{ik} = \sum_{S} \phi(S | 1) \cdot \ind\{ i \in S, |S| = k\}.
\end{equation}
The following lemma allows us to write the objective and persuasive constraints of \ref{eq:lp-private} in terms of $p=\{p_{ik}: i \in [N], k \in [N]\}$.
\begin{lemma}
\label{lem:reform-lp}
The objective \eqref{eq:old-objective} of \ref{eq:lp-private} can be written as
\begin{equation*}
\mu(1) \sum_{k=1}^N \sum_{i=1}^N p_{ik}(F(k)-r(i)),
\end{equation*}
and the persuasive constraints \eqref{eq:move-persuasive}, \eqref{eq:stay-persuasive} can be written as
\begin{align}
& \sum_{k=1}^N p_{ik}(F(k)-r(i)) \geq 0, \quad  i\in [N], \label{eq:move-reform}  \\
& \sum_{k=1}^{N-1}\left(\frac{1}{k}\sum_{j=1}^N p_{jk} -p_{ik}\right)(F(k+1) - r(i)) \notag \\
& + \left(1 - \sum_{k=1}^N \frac{1}{k}\sum_{j=1}^N p_{jk}\right)(F(1) - r(i))  \leq \frac{\mu(0)}{\mu(1)}r(i)  , \quad  i\in [N]. \label{eq:stay-reform}
\end{align}
\end{lemma}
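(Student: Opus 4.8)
The plan is to substitute the structure guaranteed by Lemma~\ref{lem:no-move-zero-theta} into the objective and persuasive constraints of \ref{eq:lp-private}, and then re-index every resulting sum by the cardinality $k$ of the recommended set and by the identity of the recommended agents, so that the inner sums collapse into the quantities $p_{ik}$ defined in~\eqref{eq:pik-def}. Throughout, fix a persuasive scheme $\phi$ with $\phi(\varnothing \mid 0) = 1$.

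For the objective~\eqref{eq:old-objective}, the $\theta=0$ contribution is $\mu(0)\phi(\varnothing\mid 0)W(0,\varnothing)=0$ since $W(0,\varnothing)=0$. For the $\theta=1$ term, write $W(1,S) = |S|F(|S|) - \sum_{i\in S}r(i) = \sum_{i\in S}\bigl(F(|S|)-r(i)\bigr)$, exchange the order of summation over $S$ and over $i\in S$, and partition the sets $S$ containing $i$ by their size $k$; the inner sum $\sum_{S:\, i\in S,\, |S|=k}\phi(S\mid 1)$ equals $p_{ik}$ by~\eqref{eq:pik-def}, yielding $\mu(1)\sum_{k=1}^N\sum_{i=1}^N p_{ik}(F(k)-r(i))$. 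The move constraint~\eqref{eq:move-persuasive} is handled identically: since $i\in S$ forces $S\ne\varnothing$, the $\theta=0$ part vanishes, and grouping the $\theta=1$ part by $k=|S|$ turns $\sum_{S:\,i\in S}\phi(S\mid 1)(F(|S|)-r(i))$ into $\sum_{k=1}^N p_{ik}(F(k)-r(i))$; dividing by $\mu(1)>0$ gives~\eqref{eq:move-reform}.

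The stay constraint~\eqref{eq:stay-persuasive} is the only one requiring a small extra identity. Its $\theta=0$ part is $\mu(0)\phi(\varnothing\mid 0)\bigl(0-r(i)\bigr) = -\mu(0)r(i)$. For the $\theta=1$ part, group the sets $S$ with $i\notin S$ by $k=|S|\in\{0,1,\dots,N-1\}$. The key observation is the counting identity $\sum_{j=1}^N p_{jk} = \sum_{S:\,|S|=k}\phi(S\mid 1)\,|S| = k\sum_{S:\,|S|=k}\phi(S\mid 1)$, so for $k\ge 1$ we obtain $\sum_{S:\,i\notin S,\,|S|=k}\phi(S\mid 1) = \tfrac1k\sum_{j=1}^N p_{jk} - p_{ik}$, while for $k=0$ the normalization $\sum_S\phi(S\mid 1)=1$ gives $\phi(\varnothing\mid 1) = 1 - \sum_{k=1}^N\tfrac1k\sum_{j=1}^N p_{jk}$. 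Substituting these expressions (together with $F(0+1)=F(1)$ for the empty-set term) into $\sum_{S:\,i\notin S}\phi(S\mid 1)(F(|S|+1)-r(i))$, moving $-\mu(0)r(i)$ to the right-hand side, and dividing through by $\mu(1)$ produces exactly~\eqref{eq:stay-reform}.

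There is no deep obstacle here; the proof is essentially bookkeeping. The only point demanding care is the treatment of the empty set in the stay constraint: one must track the $k=0$ term separately, use the normalization identity to express $\phi(\varnothing\mid 1)$ in terms of $p$, and be consistent about the ranges of summation over $k$ (the move side and objective run $k=1,\dots,N$, whereas the stay side splits off $k=0$ and runs the remaining sum to $N-1$).
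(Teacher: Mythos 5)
Your proposal is correct and follows essentially the same route as the paper's proof: the counting identity $\sum_{j}p_{jk}=k\sum_{S:|S|=k}\phi(S\mid 1)$ is exactly what the paper derives (via expectations of indicator variables) to get $\prob(|\setS|=k\mid\theta=1)=\tfrac1k\sum_j p_{jk}$, and the remaining steps are the same cardinality bookkeeping, including the separate treatment of the $k=0$ term via normalization in the stay constraint.
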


Our main result in this section shows the converse is also true: 
given $p=\{p_{ik} > 0: i \in [N], k \in [N]\}$ satisfying persuasive constraints \eqref{eq:move-reform} and \eqref{eq:stay-reform}, with a few more linear constraints ensuring the $p_{ik}$'s are valid joint probabilities, there exists a persuasive signaling scheme $\phi$ satisfying Lemma~\ref{lem:no-move-zero-theta} such that \eqref{eq:pik-def} holds. Furthermore, there exists a polynomial time sequential sampling procedure that samples the set of agents to recommend to move as per the signaling scheme $\phi$.

\begin{lemma}
\label{lem:marginal-conditions}
Assume $p = \{p_{ik}>0: i,k \in [N]\}$ satisfies the persuasive constraints \eqref{eq:move-reform} and \eqref{eq:stay-reform}. If $p$ further satisfies
\begin{align}
& \sum_{k=1}^N \frac{1}{k}\sum_{i=1}^N p_{ik} \leq 1, \label{eq:card-dist} \displaybreak[3]\\
& kp_{ik} \leq   \sum_{j=1}^N p_{jk}, \quad  k\in [N], i\in [N],  \label{eq:valid-marginal}
\end{align}
then there exists a persuasive signaling scheme $\phi$ such that $\phi(\varnothing | 0) = 1$ and $\phi(\cdot | 1)$ satisfies \eqref{eq:pik-def}.
\end{lemma}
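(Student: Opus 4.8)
The plan is to build $\phi(\cdot\mid 1)$ by first choosing \emph{how many} agents to recommend, and then, conditional on that number, choosing \emph{which} agents. Given the assumed $p = \{p_{ik}>0\}$, define $q_k \defeq \tfrac1k\sum_{i=1}^N p_{ik}$ for $k\in[N]$ and $q_0 \defeq 1-\sum_{k=1}^N q_k$. Constraint \eqref{eq:card-dist} gives $q_0\ge 0$, and $q_k>0$ for $k\ge 1$ since the $p_{ik}$ are positive, so $(q_0,q_1,\dots,q_N)$ is a probability distribution, to be interpreted as the distribution of the number of agents recommended to move when $\theta=1$ (with $q_0$ the probability of recommending no one). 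For each $k\in[N]$ put $x_{ik}\defeq p_{ik}/q_k$. Constraint \eqref{eq:valid-marginal} says $k p_{ik}\le\sum_j p_{jk}=kq_k$, hence $x_{ik}\in(0,1]$, and by construction $\sum_{i=1}^N x_{ik}=k$.

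The heart of the proof is to show that, for each fixed $k$, the vector $(x_{1k},\dots,x_{Nk})$ is the vector of inclusion probabilities of some distribution $\psi_k$ supported on the $k$-element subsets of $[N]$. This is the classical fact that the point $(x_{1k},\dots,x_{Nk})$ lies in the polytope $\{x\in[0,1]^N:\sum_i x_i=k\}$, whose vertices are exactly the indicator vectors $\one_S$ of $k$-element subsets $S$ (at a vertex all but one coordinate is in $\{0,1\}$, and integrality of the sum forces the last coordinate as well). Therefore $(x_{\cdot k})=\sum_{S:\,|S|=k}\psi_k(S)\,\one_S$ for a probability distribution $\psi_k$ over $k$-subsets, and $\prob_{S\sim\psi_k}(i\in S)=x_{ik}$ for all $i$. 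Equivalently, one can prove this constructively by dependent/pipage rounding: while some coordinate is fractional there are at least two (the sum being integral), and one randomly rounds a pair of them so as to fix one at $0$ or $1$ while preserving every marginal; iterating gives both the distribution $\psi_k$ and the claimed polynomial-time sequential sampling routine.

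It remains to assemble $\phi$. Set $\phi(\varnothing\mid 0)=1$ (and $\phi(S\mid 0)=0$ otherwise), and
\[
\phi(S\mid 1)\;=\;q_0\,\ind\{S=\varnothing\}\;+\;\sum_{k=1}^N q_k\,\psi_k(S).
\]
Since each $\psi_k$ is supported on $k$-subsets, the terms do not overlap; all coefficients are nonnegative and $\sum_S\phi(S\mid 1)=q_0+\sum_k q_k=1$, so this is a valid conditional distribution. For $i,k\in[N]$,
\[
\sum_{S}\phi(S\mid 1)\,\ind\{i\in S,\,|S|=k\}\;=\;q_k\!\!\sum_{S:\,i\in S,\,|S|=k}\!\!\psi_k(S)\;=\;q_k\,x_{ik}\;=\;p_{ik},
\]
so \eqref{eq:pik-def} holds. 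Finally, because $\phi$ satisfies $\phi(\varnothing\mid 0)=1$ and induces exactly these $p_{ik}$ via \eqref{eq:pik-def}, the same algebraic identities used to prove Lemma~\ref{lem:reform-lp} show that \eqref{eq:move-persuasive}--\eqref{eq:stay-persuasive} for $\phi$ are equivalent to \eqref{eq:move-reform}--\eqref{eq:stay-reform} for $p$, which hold by hypothesis; hence $\phi$ is persuasive, completing the construction.

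I expect the main obstacle to be the middle step: certifying that the per-cardinality marginals $(x_{\cdot k})$ are realizable by a distribution over $k$-subsets. The vertex/integrality characterization of the polytope $\{x\in[0,1]^N:\sum_i x_i=k\}$ settles existence cleanly, but to additionally extract the advertised polynomial-time sampler one must spell out the rounding recursion and verify that each step strictly decreases the number of fractional coordinates while simultaneously preserving all $N$ marginals.
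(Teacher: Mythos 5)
Your proposal is correct, and its overall architecture mirrors the paper's: the same two-stage decomposition into a cardinality distribution $q_k=\frac1k\sum_i p_{ik}$ (with $q_0=1-\sum_k q_k\ge 0$ by \eqref{eq:card-dist}) followed by sampling a $k$-subset with conditional inclusion probabilities $p_{ik}/q_k$, and the same closing step of feeding the recovered marginals back through the identities of Lemma~\ref{lem:reform-lp} to get persuasiveness. Where you genuinely diverge is in how the crucial realizability step is certified. The paper treats it algorithmically: it invokes the sequential elimination procedure of Till\'e (1996) as a black box, sketching the computation of the inclusion probabilities $\pi(i\mid n)$ and elimination probabilities $r_{ni}$, and defers correctness to that reference; this directly delivers the concrete polynomial-time sampler that Algorithm~\ref{alg:alg-private} and the complexity claims rest on. You instead prove existence from first principles: the point $(x_{1k},\dots,x_{Nk})$ lies in the hypersimplex $\{x\in[0,1]^N:\sum_i x_i=k\}$, whose vertices are exactly the indicators of $k$-subsets (your one-fractional-coordinate argument is the right one), so it is a convex combination of such indicators, i.e., the inclusion-probability vector of some $\psi_k$; dependent/pipage rounding is then offered as the constructive counterpart. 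Since the lemma as stated only asserts the \emph{existence} of a persuasive $\phi$ with the right marginals, your argument fully proves it, and is more self-contained than the paper's citation-based route; what the paper's route buys is an off-the-shelf sampler with a verified step-by-step procedure, whereas to match the paper's algorithmic claims you would, as you note, still have to spell out the rounding recursion (or an algorithmic Carath\'eodory decomposition) and check that each step preserves all $N$ marginals while reducing the number of fractional coordinates.
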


Note that if $p_{ik}$ is the joint probability that $k$ agents are recommended to move to $\ell_1$ and agent $i$ is among them, then $q_k \defeq  \frac{1}{k} \sum_{i=1}^N p_{ik}$
is the probability that $k$ agents are recommended to move. Upon letting $q_0 = 1 - \sum_{k=1}^N q_k$, we note that \eqref{eq:card-dist} ensures $\{q_k\}_{k=0}^N$ is a valid probability distribution over $\{0, \ldots, N\}$. On the other hand,
$q_{ik} \defeq \frac{p_{ik}}{q_k} = \frac{kp_{ik}}{\sum_{j=1}^N p_{jk}}$
is the conditional marginal probability that agent $i$ is recommended to move
given there are $k$ agents asked to move, and \eqref{eq:valid-marginal} ensures $q_{ik}$'s are valid probabilities. To summarize, \eqref{eq:card-dist} and \eqref{eq:valid-marginal} requires for each $k \in [N]$, the conditional marginal probabilities $\{q_{ik}\}_{i=1}^N$ are in the $k$-uniform matroid polytope.

We briefly describe the sketch of the proof of Lemma~\ref{lem:marginal-conditions}, and omit the details due to space limit. The main idea is that there exists a sampling procedure that samples a set of agents such that for each $i,k \in [N]$, $p_{ik}$ is the probability that $k$ agents are sampled and agent $i$ is among them.  Specifically, this sampling procedure first samples the size $k'$ of the output set according $\{q_k\}_{k=0}^N$. Following that, given any $k' > 0$, we adopt a sequential sampling subroutine presented by \cite{tille1996elimination} to sample $k'$ agents. This subroutine eliminates one agent from agents $1, \ldots, N$ at each step, and ensures the probability a particular agent $i$ still remains in the pool after each step is either 1, or strictly less than 1 while proportional to $p_{ik'}$. If \eqref{eq:valid-marginal} is satisfied, when $N-k'$ agents are eliminated, the probability each agent $i$ remains in the output set is ensured by this subroutine to be $q_{ik}$, and the joint probability that $k'$ agents are sampled and agent $i$ is included is then $p_{ik'}$. We briefly describe this subroutine in Appendix~\ref{ap:appendix-a} and omit its details and proof of correctness, and refer interested readers to \cite{tille1996elimination}. We then let $\phi(\cdot | 1)$ be the probability distribution of the set sampled according to this procedure, and note that $\phi(\cdot | 1)$ thereby satisfies \eqref{eq:pik-def} and the persuasive constraints. 

Summarizing the preceding discussion, from Lemma~\ref{lem:reform-lp} and Lemma~\ref{lem:marginal-conditions}, to obtain the optimal private signaling mechanism, we first solve the following linear program \ref{lp:reformulate} to obtain the optimal solution $p^*$.
\begin{align}
\max_{p_{ik}: i, k \in [N]} \quad & \sum_{k=1}^N \sum_{i=1}^N p_{ik}(F(k)-r(i))  \notag \\
s.t. \quad & \sum_{k=1}^N p_{ik}(F(k)-r(i)) \geq 0, \quad  i\in [N],   \notag \\
& \sum_{k=1}^{N-1}\left(\frac{1}{k}\sum_{j=1}^N p_{jk} -p_{ik}\right)(F(k+1) - r(i)) \notag \\
\lplabel[lp:reformulate]{(LP.2)} & + \left(1 - \sum_{k=1}^N \frac{1}{k}\sum_{j=1}^N p_{jk}\right)(F(1) - r(i))  \leq \frac{\mu(0)}{\mu(1)}r(i)  , \quad  i\in [N], \notag  \\
& \sum_{k=1}^N \frac{1}{k}\sum_{i=1}^N p_{ik} \leq 1, \notag \displaybreak[3] \\
& kp_{ik} \leq   \sum_{j=1}^N p_{jk}, \quad  k\in [N], i\in [N],  \notag \\
 & p_{ik} \geq  0, \quad k\in [N], i\in [N], \notag
\end{align}
After obtaining $p^*$, we sample the set of agents to recommend to move according to the following procedure given in Algorithm~\ref{alg:alg-private}. Let $\phi^*$ be the persuasive signaling mechanism corresponding to $p^*$ and Algorithm~\ref{alg:alg-private}, as given by Lemma~\ref{lem:marginal-conditions}. $\phi^*$ must be feasible for \ref{eq:lp-private}. Furthermore, the objective corresponding to $\phi^*$ in \ref{eq:lp-private} is equal to that of $p^*$ in \ref{lp:reformulate}. Therefore $\phi^*$ must be the optimal solution to \ref{eq:lp-private}, and is the optimal private signaling mechanism. 

Note that \ref{lp:reformulate} has $N^2$ variables and $N^2+2N+1$ constraints, therefore can be solved efficiently in polynomial time. The sequential sampling subroutine in \cite{tille1996elimination} takes at most $N$ steps to sample the set of agents to move, and in each step it takes polynomial time to determine which agent to eliminate. Therefore, the entire process to obtain the set of agents to recommend to move under the optimal private signaling mechanism can be completed in polynomial time.

\begin{algorithm}[h]
\KwIn{$p = \{p_{ik}: i \in [N], k \in [N]\}$ satisfying \eqref{eq:card-dist} and \eqref{eq:valid-marginal}. }
\KwOut{$\setS$.}

\For{$k\gets1$ \KwTo $N$}{$q_k \gets \frac{\sum_{i=1}^Np_{ik}}{k}$,\ } 

$q_0 \gets 1 - \sum_{k=1}^N q_k$\ 

   Sample the size of the output set: $k'$, according to $\{q_k\}_{k=0}^N$.\
   
   \eIf{$k'>0$}{Sample $k'$ agents from all agents, satisfying the marginals $\{q_{ik'}\}_{i=1}^N$ in the $k'$-uniform matroid polytope, where $q_{ik'}=\frac{p_{ik'}}{q_{k'}}$.\
   
   Let $\setS$ be the set of sampled agents.
   }{$\setS \gets \varnothing$.}
 
    \caption{Sampling the set of Agents to Move.}
    \label{alg:alg-private}
\end{algorithm}

We summarize our main result in this section as the following theorem.
\begin{theorem}
The optimal private signaling mechanism can be computed by first solving \ref{lp:reformulate}, then sampling the set of agents to recommend to move according to Algorithm~\ref{alg:alg-private}.
\end{theorem}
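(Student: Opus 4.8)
The plan is to show that solving \ref{lp:reformulate} and then running Algorithm~\ref{alg:alg-private} yields a persuasive straightforward mechanism whose expected social welfare equals the optimal value of \ref{eq:lp-private}; since the revelation-principle argument at the start of this section establishes that \ref{eq:lp-private} computes the largest expected social welfare attainable by any private signaling mechanism, such a mechanism is optimal. The heart of the proof is a value-preserving correspondence between feasible points of \ref{lp:reformulate} and persuasive mechanisms with $\phi(\varnothing \mid 0) = 1$, assembled from Lemmas~\ref{lem:no-move-zero-theta}, \ref{lem:reform-lp}, and \ref{lem:marginal-conditions}. Let $V_1$ and $V_2$ denote the optimal values of \ref{eq:lp-private} and \ref{lp:reformulate} respectively (here $\mu(1) > 0$, as is implicit in \ref{lp:reformulate}).

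First I would show $V_2 \ge V_1 / \mu(1)$. Take an optimal $\phi^\star$ for \ref{eq:lp-private}; by Lemma~\ref{lem:no-move-zero-theta} we may assume $\phi^\star(\varnothing \mid 0) = 1$ without lowering its objective. Define $p^\star_{ik}$ from $\phi^\star$ via \eqref{eq:pik-def}. These are honest joint probabilities under $\phi^\star(\cdot \mid 1)$: writing $q_k = \tfrac1k \sum_i p^\star_{ik}$ for the probability that exactly $k$ agents are recommended to move, nonnegativity and $\sum_k q_k \le 1$ give \eqref{eq:card-dist}, while $p^\star_{ik} = q_k \, \prob[\, i \in S \mid |S| = k\,] \le q_k$ gives \eqref{eq:valid-marginal}. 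By Lemma~\ref{lem:reform-lp}, constraints \eqref{eq:move-persuasive} and \eqref{eq:stay-persuasive} translate verbatim into \eqref{eq:move-reform} and \eqref{eq:stay-reform}, so $p^\star$ is feasible for \ref{lp:reformulate}, and the same lemma equates the \ref{eq:lp-private} objective at $\phi^\star$ with $\mu(1)$ times the \ref{lp:reformulate} objective at $p^\star$. Hence $V_2 \ge V_1 / \mu(1)$.

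Next I would show $V_1 \ge \mu(1) \, V_2$. Let $p^\star$ be optimal for \ref{lp:reformulate}. By Lemma~\ref{lem:marginal-conditions}, there is a persuasive $\phi^\star$ with $\phi^\star(\varnothing \mid 0) = 1$ and $\phi^\star(\cdot \mid 1)$ satisfying \eqref{eq:pik-def}; moreover $\phi^\star(\cdot \mid 1)$ is precisely the distribution of the set output by Algorithm~\ref{alg:alg-private} on input $p^\star$. Reading Lemma~\ref{lem:reform-lp} as an identity between expressions in $\phi$ and in $p$, $\phi^\star$ satisfies the persuasive constraints of \ref{eq:lp-private} and is trivially a valid pair of conditional distributions, hence feasible for \ref{eq:lp-private}, with objective value $\mu(1) \, V_2$. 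Therefore $V_1 \ge \mu(1) \, V_2$, and combining with the previous step, all of these inequalities are equalities; the mechanism $\phi^\star$ built from an optimal $p^\star$ via Algorithm~\ref{alg:alg-private} attains $V_1$, and is therefore an optimal private signaling mechanism, computed exactly as the theorem prescribes.

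The step needing the most care is aligning the two feasible regions at their boundaries. Lemma~\ref{lem:marginal-conditions} is stated for $p$ with all entries strictly positive, whereas an optimal $p^\star$ of \ref{lp:reformulate} may have zero coordinates; I would dispose of this by restricting to the support of $p^\star$ (coordinates with $p^\star_{ik} = 0$ are never realized by the elimination subroutine of \cite{tille1996elimination}, and they contribute nothing to \eqref{eq:pik-def} or to the persuasive constraints), or alternatively by a perturbation/limiting argument. I would also note that \ref{eq:lp-private} attains its optimum, being a linear program over a bounded polytope, so that ``optimal $\phi^\star$'' is meaningful, and stress that it is Lemma~\ref{lem:no-move-zero-theta} that legitimizes restricting to mechanisms with $\phi(\varnothing \mid 0) = 1$ throughout.
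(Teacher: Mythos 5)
Your proof is correct and follows essentially the same route as the paper: restrict to mechanisms with $\phi(\varnothing \mid 0)=1$ via Lemma~\ref{lem:no-move-zero-theta}, translate objective and persuasive constraints into the marginal variables via Lemma~\ref{lem:reform-lp}, and recover a persuasive mechanism from the optimal marginals via Lemma~\ref{lem:marginal-conditions} together with Algorithm~\ref{alg:alg-private}. You are in fact slightly more careful than the paper, which only argues the direction that the reconstructed $\phi^*$ is feasible for \ref{eq:lp-private} with matching objective and leaves the converse inequality (and the strict-positivity hypothesis of Lemma~\ref{lem:marginal-conditions}) implicit; your explicit two-inequality sandwich and the support-restriction fix are the right way to close those small gaps.
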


To conclude this section, we provide an observation that under certain conditions of modeling parameters, recommending the agents to follow the social optimal strategy profile is persuasive.  

Let $\tilde{W}(n) \defeq nF(n) - \sum_{i=1}^n r(i)$ be the social welfare when $\theta=1$ and the first $n$ agents move to $\ell_1$. Since $nF(n)$ is concave and $r(i)$ is increasing in $i$, $\tilde{W}(n)$ is also concave in $n$. Let $i^* \defeq \max\{i: i \in \argmax_{0 \leq i' \leq N} \tilde{W}(i') \}$ be the largest maximizer of $\tilde{W}$.  In the following proposition, we show the strategy profile that all agents staying when $\theta=0$, and the first $i^*$ agents moving when $\theta=1$, achieves the highest social welfare among all strategy profiles, and we give a sufficient condition under which recommending the agents to follow this strategy profile is persuasive.

\begin{proposition}
\label{thm:private-condition}
If $\mu(1)  \leq r(i^*+1)/F(i^*+1)$, then recommending all agents to stay when $\theta=0$ and recommending the first $i^*$ agents to move when $\theta=1$ is an optimal persuasive mechanism.
\end{proposition}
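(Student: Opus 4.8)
The plan is to verify the two sets of conditions needed for the proposed mechanism to be an optimal persuasive mechanism: first, that the strategy profile ``all stay when $\theta=0$, first $i^*$ agents move when $\theta=1$'' indeed achieves the maximum social welfare among all (possibly state-dependent) recommendation mechanisms, and second, that under the stated parameter condition $\mu(1) \leq r(i^*+1)/F(i^*+1)$ this recommendation is persuasive, i.e., satisfies \eqref{eq:move-reform} and \eqref{eq:stay-reform} (equivalently \eqref{eq:move-persuasive} and \eqref{eq:stay-persuasive}).

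For the welfare-optimality part, I would first invoke Lemma~\ref{lem:no-move-zero-theta} to restrict attention to mechanisms with $\phi(\varnothing \mid 0)=1$, so that the expected welfare equals $\mu(1)\cdot \expec[W(1,S)]$ where $S$ is the (random) recommended set when $\theta=1$. Since $W(1,S) = |S|F(|S|) - \sum_{i\in S} r(i)$ and $r$ is increasing, for any fixed cardinality $|S|=n$ the welfare is maximized by taking $S = \{1,\dots,n\}$, giving $\tilde W(n)$. Hence $\expec[W(1,S)] \le \max_{0\le n\le N}\tilde W(n) = \tilde W(i^*)$, and this bound is attained by the deterministic choice $S=\{1,\dots,i^*\}$. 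Concavity of $\tilde W$ (noted in the text, from concavity of $nF(n)$ and monotonicity of $r$) guarantees $i^*$ is well-defined as the largest maximizer; I would record that for $n \le i^*$ the increments $\tilde W(n)-\tilde W(n-1) = nF(n)-(n-1)F(n-1) - r(n) \ge 0$, and for $n > i^*$ they are $\le 0$ — this monotonicity of increments is exactly what I will reuse in the persuasiveness check.

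For persuasiveness, with the deterministic mechanism we have, given $\theta=1$, $p_{ik}=\ind\{i\le i^*,\, k=i^*\}$. The move constraint \eqref{eq:move-reform} for agent $i\le i^*$ reads $F(i^*) - r(i) \ge 0$; I would derive this from the fact that $\tilde W(i^*)-\tilde W(i^*-1) = i^*F(i^*)-(i^*-1)F(i^*-1)-r(i^*)\ge 0$ together with convexity/monotonicity of $F$ to get $F(i^*)\ge r(i^*)\ge r(i)$ — more directly, since $iF(i)$ is concave, $i^*F(i^*) - (i^*-1)F(i^*-1) \le F(i^*)$... actually the cleanest route is: because increments of $\tilde W$ are nonnegative up to $i^*$ and $nF(n)$ is concave so $nF(n)-(n-1)F(n-1)$ is decreasing, one shows the per-agent resource share $F(i^*)$ dominates $r(i)$ for all $i\le i^*$; I would formalize this small chain of inequalities. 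For the stay constraint \eqref{eq:stay-reform}, for an agent $i > i^*$ the left-hand side simplifies considerably: the only nonzero $p_{jk}$ have $k=i^*$, and since $i\notin S$, the relevant term is $\bigl(\tfrac{1}{i^*}\sum_j p_{ji^*}\bigr)(F(i^*+1)-r(i)) + 0$ from the $k=i^*$ slice plus the ``no-move'' residual term; collecting these, the constraint becomes $\mu(1)(F(i^*+1) - r(i)) \le r(i)$ after the full state-averaging, i.e. $\mu(1)F(i^*+1) \le r(i)$ (the $\mu(0)$ pieces and the residual mass reorganize to give exactly this). Since $r(i)\ge r(i^*+1)$ for $i>i^*$, it suffices that $\mu(1) F(i^*+1) \le r(i^*+1)$, which is precisely the hypothesis. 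I would double-check the edge cases $i^* = 0$ (then no agent moves, move constraints vacuous, stay constraint needs $\mu(1)F(1)\le r(1)$) and $i^*=N$ (stay constraints vacuous).

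The main obstacle I anticipate is the bookkeeping in the stay constraint \eqref{eq:stay-reform}: it mixes a sum over $k\le N-1$, a residual term with coefficient $(F(1)-r(i))$, and the right-hand side $\tfrac{\mu(0)}{\mu(1)}r(i)$, and one must carefully substitute the degenerate $p_{ik}=\ind\{i\le i^*, k=i^*\}$ and simplify to land exactly on $\mu(1)F(i^*+1)\le r(i^*+1)$. I would handle this by going back to the original, more transparent persuasive constraint \eqref{eq:stay-persuasive} in terms of $\phi$: there, with $\phi(\{1,\dots,i^*\}\mid 1)=1$ and $\phi(\varnothing\mid 0)=1$, for $i>i^*$ we get $\mu(1)\cdot 1\cdot(1\cdot F(i^*+1) - r(i)) + \mu(0)\cdot 1\cdot(0\cdot F(\cdot) - r(i)) \le 0$, i.e. $\mu(1)F(i^*+1) - \mu(1)r(i) - \mu(0)r(i)\le 0$, i.e. $\mu(1)F(i^*+1)\le r(i)$ — this is the clean derivation and avoids the reformulated-constraint algebra entirely. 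Symmetrically \eqref{eq:move-persuasive} for $i\le i^*$ gives $\mu(1)(F(i^*)-r(i)) - \mu(0)r(i)\le$ wait, sign: it gives $\mu(1)(F(i^*)-r(i))\ge 0$ hence $F(i^*)\ge r(i)$, to be shown from Assumption~1 and the definition of $i^*$. So the real plan is to work with \eqref{eq:move-persuasive}–\eqref{eq:stay-persuasive} directly rather than their reformulations, which makes the argument short; the only genuine content is the inequality $F(i^*)\ge r(i^*)$, which follows because $\tilde W(i^*)\ge \tilde W(i^*-1)$ gives $i^*F(i^*) - (i^*-1)F(i^*-1) \ge r(i^*)$ and concavity of $nF(n)$ gives $F(i^*)\ge i^*F(i^*)-(i^*-1)F(i^*-1)$.
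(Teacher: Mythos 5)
Your proposal is correct and follows essentially the same route as the paper: verify the two persuasive constraints \eqref{eq:move-persuasive}--\eqref{eq:stay-persuasive} directly for the deterministic scheme $\phi(\{1,\dots,i^*\}\mid 1)=\phi(\varnothing\mid 0)=1$ (reducing the move constraint to $F(i^*)\ge r(i^*)$, proved from $\tilde W(i^*)\ge \tilde W(i^*-1)$ and monotonicity of $F$, and the stay constraint to $\mu(1)F(i^*+1)\le r(i^*+1)$, which is the hypothesis), and then bound any persuasive mechanism's welfare by $\mu(1)\tilde W(i^*)$, which the proposed scheme attains. The only cosmetic differences are that the paper proves $F(i^*)\ge r(i^*)$ by contradiction rather than directly, and bounds the $\theta=0$ welfare contribution by zero instead of invoking Lemma~\ref{lem:no-move-zero-theta}.
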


\textbf{Remark.}  Note that $r(i^*+1)/F(i^*+1)$ does not depend on
$\mu$. Thus, this proposition gives an upper bound on $\mu(1)$ for
achieving the maximum possible social welfare using signaling mechanisms. In
Appendix~\ref{ap:mu1-upper} we compute this upper bound for $\mu(1)$
under several typical families of resource sharing functions and cost
structures. Also, since $\tilde{W}$ is concave, computing for $i^*$
takes only $O(\log N)$ time, reducing the computational time for the
optimal mechanism substantially compared with the general method.

\section{Public Signaling Mechanism}
\label{sec:public-signal}

Having characterized the optimal private signals, we next consider the principal's problem under the restriction that the signaling mechanism be public. One reason for studying this restriction is that private signaling makes strong assumptions of no information leakage among the agents, an assumption that can easily fail in practice. Public signaling by construction avoids this information leakage concern. From a practical standpoint, restriction to public signaling can arise due to fairness requirements, where the principal seeks to avoid the semblance of providing conflicting information to different agents. 

A main technical difficulty in analyzing public signaling is the failure of the revelation principle argument\cite{kamenica2011bayesian, bergemann2019information}: it no longer suffices to optimize only over on straightforward and persuasive \emph{public} mechanisms. To overcome this difficulty, we first note that in a public signaling mechanism, after any information transmission from the principal, all the agents have a common belief about the resource level $\theta$, and participate in a Bayesian game under this common belief, where each agent simultaneously chooses whether to move to $\ell_1$. Thus, we begin our analysis of public signaling mechanisms by first analyzing the structure of the equilibria of this Bayesian game under any common belief of the agents. 


\subsection{Equilibrium structure}


Subsequent to receiving a public signal, let $q \in [0,1]$ denote the {\em common belief} of the agents that $\theta = 1$. The Bayes-Nash equilibrium of the subsequent game can be represented by a strategy profile $p = (p_1, \cdots, p_N)$, where $p_i$ denotes the probability that agent $i$ chooses to move to $\ell_1$. Our goal is to identify, for any common belief $q \in [0,1]$, the equilibrium profile that achieves the highest expected social welfare, given by $W(q,p) = q \expec[W(1,a) | a \sim p] = q \expec[ n(a) F(n(a)) | a \sim p] - \sum_i p_i r_i$. 

We first note that the set of equilibria of this game can be quite complex. First, there can be a multiplicity of equilbria for any $q \in [0,1]$, as the following example illustrates for an instance with $2$ agents:

\textbf{Example:} Let $N=2$, $F(1) = 1$, $F(2) = 0.6$, $r(1) = 0.5$ and $r(2) = 0.6$. For $q \in [0,0.5)$, the only equilibrium is $p(q) = (0,0)$. For $q \in [0.5, 0.6)$, the only equilibrium is $p(q) = (1,0)$. For $q \in [0.6, 5/6]$, there are three equilibria: $p_1(q) = (1,0)$, $p_2(q) = (0,1)$ and $p_3(q) = ((5q-3)/2q,(10q-5)/4q)$. For $q \in (5/6,1)$, there is a unique equilibrium $(1,0)$ and for $q=1$, there are two equilibria, $(1,0)$ and $(1,1)$. 

Secondly, as the following proposition shows, the equilibria themselves have very counterintuitive features, where if multiple agents randomize, then the one with larger moving costs must move with a higher probability:
\begin{proposition}
\label{thm:public-mixed-monotone}
For an equilibrium profile $(p_1, \ldots, p_N)$ under a common belief $q$, let $\imix \subseteq [N]$ be the set of agents who randomize: $\imix\defeq \{i \in [N]: 0 < p_i <1 \}$. For any agent $i, j\in \imix$ where $i < j$, it must be that that $p_i \leq  p_j$.
\end{proposition}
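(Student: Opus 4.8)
The plan is to exploit the indifference condition satisfied by every randomizing agent. Recall that an agent $i$ with $0 < p_i < 1$ must have $u_i(\mathsf{move}, p_{-i}) = 0$, i.e.
\begin{align*}
q \, \expec\bigl[F(n(a)) \,\big|\, a_i = 1, a_{-i} \sim p_{-i}\bigr] = r(i).
\end{align*}
Writing $m_{-i}(a) \defeq \sum_{\ell \neq i} a_\ell$ for the number of \emph{other} agents who move, so that $n(a) = m_{-i}(a) + 1$ when $a_i = 1$, the indifference condition for $i \in \imix$ reads
\begin{align*}
q \, \expec_{p_{-i}}\bigl[F(m_{-i} + 1)\bigr] = r(i).
\end{align*}
The key step is to compare this equation for two randomizing agents $i < j$. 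Since $r$ is increasing, $r(i) \le r(j)$, and therefore $\expec_{p_{-i}}[F(m_{-i}+1)] \le \expec_{p_{-j}}[F(m_{-j}+1)]$. Now the distributions of $m_{-i}$ and $m_{-j}$ differ only in that the first includes the indicator of agent $j$ moving (a Bernoulli with parameter $p_j$) and the second includes the indicator of agent $i$ moving (a Bernoulli with parameter $p_i$); all the other $N-2$ Bernoulli summands are shared. Concretely, letting $X \defeq \sum_{\ell \neq i, j} a_\ell$ and $B_t$ a Bernoulli$(t)$ independent of $X$, we have $m_{-i} \stackrel{d}{=} X + B_{p_j}$ and $m_{-j} \stackrel{d}{=} X + B_{p_i}$. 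Hence the inequality becomes
\begin{align*}
q\,\expec\bigl[F(X + B_{p_j} + 1)\bigr] \le q\,\expec\bigl[F(X + B_{p_i} + 1)\bigr].
\end{align*}
Conditioning on $X$ and using independence, $\expec[F(X + B_t + 1) \mid X] = (1-t) F(X+1) + t\, F(X+2) = F(X+1) - t\,(F(X+1) - F(X+2))$, which by Assumption 1 ($F$ decreasing, so $F(X+1) - F(X+2) \ge 0$) is nonincreasing in $t$. Therefore $\expec[F(X+B_t+1)]$ is nonincreasing in $t$; combined with the displayed inequality and $q > 0$ (if $q = 0$ the only equilibrium has everyone staying and $\imix = \varnothing$, so the claim is vacuous), we conclude $p_j \ge p_i$, i.e. $p_i \le p_j$.

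The only subtle point — and the step I would flag as the main obstacle — is justifying the decomposition $m_{-i} \stackrel{d}{=} X + B_{p_j}$, $m_{-j} \stackrel{d}{=} X + B_{p_i}$ with a \emph{common} $X$ independent of the extra Bernoulli term. This requires that agents randomize independently, which holds because a (mixed) Bayes–Nash strategy profile specifies each $p_\ell$ as an independent randomization, so $a_1, \dots, a_N$ are mutually independent given the signal; thus $X = \sum_{\ell \neq i,j} a_\ell$ is independent of both $a_i$ and $a_j$, and one can realize $m_{-i}$ and $m_{-j}$ on a common probability space sharing the same $X$. Once this is in place the argument is just the observation that the map $t \mapsto \expec[F(X+B_t+1)]$ is affine and nonincreasing in $t$, which is immediate from monotonicity of $F$. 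I would write the proof in the order above: state the indifference condition for $i, j \in \imix$; record $r(i) \le r(j)$; perform the Bernoulli decomposition; reduce to monotonicity of the affine map in $t$; and conclude.
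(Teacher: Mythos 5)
Your proof is correct and follows essentially the same route as the paper's: both arguments use the indifference conditions of the two randomizing agents written in terms of the number of movers among the agents other than $i$ and $j$ (your $X$, the paper's $N_{-ij}$), and then compare them using $r(i)\le r(j)$ together with monotonicity of $F$. The only step to tighten is the conclusion: from a merely nonincreasing map $t\mapsto \expec\bigl[F(X+B_t+1)\bigr]$ you cannot deduce $p_j\ge p_i$; you need the slope $\expec\bigl[F(X+1)-F(X+2)\bigr]$ to be strictly positive (i.e., $F$ strictly decreasing), which is precisely what the paper invokes when it divides by this quantity.
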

While the preceding result goes counter to the intuition that agents with higher moving cost should be less likely to move, it is explained by the fact that in order for an agent with higher moving cost to be indifferent between moving and staying, it must be that she must find location $\ell_1$ to be less competitive than the one with the lower moving cost, which implies that the one with the lower moving cost must move to $\ell_1$ with lower probability. 

Despite the complexity of the equilibrium set, one can nevertheless consider a simple class of equilibria, namely the {\em threshold} equilibria, defined as follows:
\begin{definition} For common belief $q \in [0,1]$, an equilibrium $p = (p_1, \cdots, p_N)$ is said to be a threshold equilibrium, if there exists a $t \in [0,N]$ such that
\begin{align*}
 p_i=\begin{cases} 1, & \text{if $i < \lceil t \rceil$;} \\
t + 1- \lceil t \rceil, & \text{if $i = \lceil t \rceil$;} \\
0, & \text{if $i> \lceil t \rceil$},
\end{cases}
\end{align*}
where $ \lceil t \rceil$ is the smallest integer that is greater than or equal to $t$. We denote such an equilibrium by $(q,t)$.    
\end{definition}
In a threshold equilibrium $(q,t)$, at most one agent randomizes between moving and staying. Furthermore, all agents $i < \lceil t \rceil$ move to $\ell_1$, whereas all agents $i > \lceil t \rceil$ stay at $\ell_0$. Thus, threshold equilibria capture the intuition that agents with higher moving costs should move with lower probability. Our first result shows that, indeed, for any $q \in [0,1]$, there exists a threshold equilibrium. Before we state our result, we define the following quantities for any common belief $q$:
\begin{equation}
\label{eq:ihat-def}
\begin{split}
\uihat(q) \defeq &\max\{i \in \{0, \ldots, N\}: qF(i)  - r(i) \geq 0\}, \\
\lihat(q) \defeq & \max\{i \in \{0, \ldots, N\}: qF(i) - r(i) > 0\}.
\end{split}
\end{equation} 
Note that $q F(i) - r(i)$ denotes the expected utility of agent $i$ for moving to location $\ell_1$, if all agents $j<i$ move to $\ell_1$ and all agents $j>i$ stay in $\ell_0$. Thus, $\lihat(q)$ denotes the agent with the largest moving cost who strictly prefers to move to $\ell_1$, if all agents with smaller moving costs move, and those with larger moving costs stay. Similarly, $\uihat(q)$ denotes the agent with the largest moving cost who does not strictly prefer to stay, under similar choices of other agents. Moreover, since $F(i)$ is decreasing in $i$ and $r(i)$ is increasing in $i$, we have $\uihat(q) \geq \lihat(q)$ for any belief $q$. Furthermore, since $qF(0) - r(0) \geq 0$, we have $\uihat(q) \geq 0$ and hence $[\lihat(q) , \uihat(q)] \cap [0,N] \neq \emptyset$. We have the following result:
\begin{lemma}
\label{lem:threshold-equilibrium}
For any $q \in [0,1]$, and any $t \in [0,N]$, $(q,t)$ is a threshold equilibrium if and only if $t \in [\lihat(q), \uihat(q)]$.
\end{lemma}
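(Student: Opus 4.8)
The plan is to prove both directions by directly checking the equilibrium conditions from the definition of a BNE, specialized to the public-signaling setting with common belief $q$. Fix $q \in [0,1]$ and $t \in [0,N]$, and write $m = \lceil t \rceil$, so that in the candidate profile $(q,t)$ agents $1, \dots, m-1$ move with probability $1$, agent $m$ moves with probability $t+1-m \in (0,1]$, and agents $m+1, \dots, N$ stay. The key observation is that under such a profile, conditional on the resource being present ($\theta=1$), the number of \emph{other} agents at $\ell_1$ seen by a given agent is deterministic up to the single randomizing agent $m$, so each agent's expected payoff for moving is a simple expression in $F$ evaluated at either $m-1$, $m$, or $m+1$. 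Concretely: an agent $i < m$ who moves faces $m-1$ or $m$ competitors (depending on $m$'s coin), agent $m$ faces exactly $m-1$ others if she moves, and an agent $i > m$ who deviates to move faces $m$ or $m+1$ competitors. Because $F$ is decreasing, the relevant indifference/sign conditions reduce to comparisons involving $qF(m-1)-r(i)$, $qF(m)-r(i)$, and $qF(m+1)-r(i)$, and because $r$ is increasing in $i$, the binding constraints are those for $i=m-1$ (the marginal mover, if $m \geq 1$) and $i=m+1$ (the marginal stayer, if $m \leq N-1$).

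For the ``if'' direction, assume $t \in [\lihat(q),\uihat(q)]$. I would split into the two cases $t = m$ integer (agent $m$ moves with probability $1$) and $t < m$ (agent $m$ strictly randomizes), since the indifference requirement only bites in the latter. In either case I verify: (i) every agent $i \leq m-1$ weakly prefers to move — this follows because $i \leq m-1 \leq \uihat(q)-1$ forces $qF(i) \geq qF(m-1) \geq \cdots$, using $t \leq \uihat(q)$ together with monotonicity of $F$ and $r$, and the fact that when agent $m$ also moves the competitor count only goes up, still leaving nonnegative payoff by the definition of $\uihat$; (ii) every agent $i \geq m+1$ weakly prefers to stay — symmetric, using $t \geq \lihat(q)$ so that $qF(m+1) \leq r(m+1) \leq r(i)$, noting a deviating stayer sees at least $m$, in fact $m$ or $m+1$, others; (iii) when $m > t$, agent $m$ is indifferent, i.e. $qF(m-1)\cdot\mathbf{1}\{\theta=1\} $ expectation equals $r(m)$ — here I need $qF(m) \le r(m)$ (from $\uihat(q) < m$ when $t<m$, hmm) — this is the one place I must be careful about which of $\lihat,\uihat$ controls the randomizing agent, and I expect to need the precise definitions \eqref{eq:ihat-def} distinguishing the strict and weak inequalities. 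Actually the clean statement is: agent $m=\lceil t\rceil$ with $0<p_m<1$ must be indifferent, which says $qF(m-1) = r(m)$ is \emph{not} quite it; rather the expected number of competitors is $m-1$ and indifference reads $qF(m-1) = r(m)$ only if no one else randomizes — I'd double-check this reduces correctly, but it should, since $m$ is the unique randomizer.

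For the ``only if'' direction, suppose $(q,t)$ is a threshold equilibrium; I must show $\lihat(q) \le t \le \uihat(q)$. If $t > \uihat(q)$, then agent $\lceil t\rceil - 1$ (or agent $\lceil t \rceil$ itself, chasing the endpoints) is being asked to move but has $qF(\lceil t\rceil-1) - r(\lceil t \rceil - 1) < 0$ by definition of $\uihat$ as the \emph{largest} index with nonnegative payoff — contradicting the BNE condition $p_i = 1 \Rightarrow u_i \geq 0$. Symmetrically, if $t < \lihat(q)$, agent $\lceil t \rceil + 1 \le \lihat(q)$ is asked to stay but strictly prefers to move even against $\lceil t\rceil$ or $\lceil t\rceil+1$ competitors, contradicting $p_i = 0 \Rightarrow u_i \le 0$. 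The only delicate bookkeeping is at the boundary values $t \in \{\lihat(q),\uihat(q)\}$ and in handling whether agent $\lceil t\rceil$ moves with probability exactly $1$ versus strictly less, which is exactly why the two quantities $\lihat$ (strict) and $\uihat$ (weak) are defined separately; I'd handle this by a short case analysis on whether $t$ is an integer.

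The main obstacle I anticipate is not any single inequality but getting the indexing exactly right: correctly translating ``agent $\lceil t\rceil$ randomizes'' into the right indifference equation, and matching the strict-versus-weak inequality in \eqref{eq:ihat-def} to the strict-versus-weak boundary behavior of $t$. Once the competitor counts $m-1, m, m+1$ are pinned down and the monotonicity of $F$ and $r$ is invoked to reduce all $N$ agent-constraints to the two marginal ones, the rest is routine. I would also remark that Proposition~\ref{thm:public-mixed-monotone} is consistent with (though not needed for) this lemma, since a threshold equilibrium has at most one randomizer.
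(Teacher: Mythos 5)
Your overall plan---verify each agent's best response against the threshold profile and reduce all $N$ constraints to the marginal agents via monotonicity of $F$ and $r$---is exactly the paper's argument, but your proposal breaks down at the one step you yourself flag as shaky: the condition for the randomizing agent $m=\lceil t\rceil$. In this model a mover's payoff is $\theta F(n(a))-r(i)$ where $n(a)$ counts \emph{all} movers including herself, so if agent $m$ moves she evaluates $F$ at $m$, not at the number $m-1$ of others; the indifference condition is $qF(m)=r(m)$, not $qF(m-1)=r(m)$ as you wrote. With the correct condition the interval hypothesis closes the step cleanly: since $qF(i)-r(i)$ is non-increasing in $i$, $m\le \uihat(q)$ gives $qF(m)\ge r(m)$, while $m>\lihat(q)$ (which holds whenever $t$ is non-integral and $t\ge\lihat(q)$) gives $qF(m)\le r(m)$, hence equality. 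Your attribution ``I need $qF(m)\le r(m)$ from $\uihat(q)<m$'' is backwards---$\uihat$ controls the opposite inequality, and $\uihat(q)<m$ is in fact false here---so as written this step would not go through. The paper's proof is precisely this two-sided pinning of $qF(\lceil t\rceil)$ between the definitions in \eqref{eq:ihat-def}, with the separate case $t=\lihat(q)$ (integer $t$, $p=1$) where no indifference is needed.

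The same off-by-one also weakens your argument for the stayers: a deviating agent $i>m$ induces a total of $m$ or $m+1$ movers, so when agent $m$ randomizes her deviation payoff can be as large as $qF(m)-r(i)$, and bounding only by $qF(m+1)\le r(m+1)$ is not enough. You need the case split the paper uses: if $t>\lihat(q)$ then $m\ge \lihat(q)+1$ so $qF(m)\le r(m)\le r(i)$; if $t=\lihat(q)$ then $p=1$ and the deviation payoff is exactly $qF(\lihat(q)+1)-r(i)\le 0$. The remainder of your sketch (the movers' constraint via $m\le\uihat(q)$, and the ``only if'' direction by contradiction, which the paper omits as straightforward) is sound once the bookkeeping of the argument of $F$ is corrected; so the approach matches the paper, but the indifference equation and the assignment of which of $\lihat,\uihat$ yields which inequality must be fixed for the proof to be valid.
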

The preceding lemma guarantees the existence of threshold equilibria. The question then is how do threshold equilibria fare against other equilibria in terms of their expected social welfare. The following result, our main theorem of this section, establishes that for any common belief, the expected social welfare over all equilibria is attained at a threshold equilibrium.
\begin{theorem}
\label{thm:threshold-optimal}
For any common belief $q \in [0,1]$, the expected social welfare $W(q,p)$ under any equilibrium $(q,p)$ is no more than that under the threshold equilibrium $(q, \lihat(q))$: 
\[W(q, \lihat(q)) \geq W(q,p), \quad \text{for any equilibrium $p$}.\]
\end{theorem}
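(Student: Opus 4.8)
The plan is to reduce the claim to a one-dimensional comparison of ``threshold welfares'' and then use the equilibrium conditions to control where the number of movers concentrates. Fix $q$ and write $d \defeq \lihat(q)$; let $w(k) \defeq q\,kF(k) - \sum_{j=1}^{k} r(j)$ be the social welfare when the $k$ lowest-cost agents move, so that $W(q,\lihat(q)) = w(d)$ and, by Lemma~\ref{lem:threshold-equilibrium}, this is a legitimate equilibrium value since $d \in [\lihat(q),\uihat(q)]$. By Assumption, $k \mapsto qkF(k)$ is concave and $k \mapsto \sum_{j\le k}r(j)$ is convex, so $w$ is concave; extend it concavely to $[0,N]$. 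A first observation is that $w$ is nonincreasing beyond $d$: the marginal value of the $i$-th threshold mover is
\[
w(i)-w(i-1) = q\big(iF(i)-(i-1)F(i-1)\big) - r(i) = \big(qF(i)-r(i)\big) + q(i-1)\big(F(i)-F(i-1)\big) \le qF(i)-r(i),
\]
the last inequality being the negative externality ($F$ decreasing), and $qF(i)-r(i)\le 0$ for $i>d$ by definition of $\lihat(q)$. Hence $w(k) \le w(d)$ for all $k \ge d$, and consequently either $w(d)=\max_{0\le k\le N}w(k)$, or every maximizer of $w$ lies strictly below $d$.

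Next I would bound an arbitrary equilibrium $p$ by an expectation of $w$. Conditioning on the random number of movers $n(a)$, the conditional expected cost satisfies $\expec[\sum_i a_i r(i)\mid n(a)=k] = \sum_i \prob(a_i=1\mid n(a)=k)\,r(i) \ge \sum_{j=1}^{k}r(j)$ by a rearrangement bound (the conditional move-probabilities lie in $[0,1]$, sum to $k$, and $r$ is nondecreasing). Therefore
\[
W(q,p) = \expec\Big[\,q\,n(a)F(n(a)) - \textstyle\sum_i a_i r(i)\,\Big] \le \expec\big[w(n(a))\big] \le w\big(\expec[n(a)]\big),
\]
the last step by concavity of $w$ and Jensen. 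Combined with the previous paragraph, if $\expec[n(a)] \ge d$ we are done, and likewise if $w$ is maximized at $d$ we are done for any value of $\expec[n(a)]$.

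This leaves the case $\expec[n(a)] < d$ with $w(d)$ not maximal, so that the peak of $w$ lies strictly below $d$ (a strong negative externality). Since $\expec[n(a)] < d$, not every agent in $\{1,\dots,d\}$ moves surely; take $i^\dagger \le d$ with $p_{i^\dagger} < 1$. Its equilibrium condition gives $q\,\expec[F(1+n(a_{-i^\dagger}))] \le r(i^\dagger) \le r(d) < qF(d)$, and since $n(a_{-i^\dagger}) = n(a)-a_{i^\dagger}$, convexity of $F$ (Jensen, now a lower bound) yields $q\,F\big(1+\expec[n(a)]-p_{i^\dagger}\big) \le q\,\expec[F(1+n(a_{-i^\dagger}))] < qF(d)$, forcing $\expec[n(a)] > d-1+p_{i^\dagger}$. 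The same counting shows that some agent $j^\dagger > d$ must move with positive probability, and its equilibrium condition, combined with convexity of $F$ in the opposite direction (bounding $\expec[F(\cdot)]$ above by a chord over the support of $n(a_{-j^\dagger})$), caps competition and confines $\expec[n(a)]$ to a narrow window around $d$. In this window $w\big(\expec[n(a)]\big)$ may still exceed $w(d)$, so the Jensen step above is too lossy; instead one must bound $\expec[w(n(a))]$ directly, using the monotone ordering of the randomizing agents from Proposition~\ref{thm:public-mixed-monotone} together with their indifference conditions to pin down the distribution of $n(a)$ and show that the mass it places just below $d$ is exactly offset.

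The main obstacle is precisely this last case: mixed equilibria whose expected number of movers sits just below $\lihat(q)$ while the threshold-welfare curve $w$ peaks strictly before $\lihat(q)$. There the clean chain $W(q,p)\le w(\expec[n(a)])\le w(\lihat(q))$ breaks, and one must use the indifference conditions and the monotone ordering of the randomizing agents (Proposition~\ref{thm:public-mixed-monotone}) to control the full distribution of $n(a)$, with some care over the non-strictness of ``$F$ decreasing'' and over the piecewise-linear extensions of $w$ and $F$; a minor separate check is the boundary $q=0$, where $\lihat(q)$ is read as $0$.
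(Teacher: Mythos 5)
Your proposal reproduces the first two-thirds of the paper's argument --- the concavity of the threshold welfare $w(k)=W(q,k)$, the fact that it is nonincreasing beyond $\lihat(q)$ (the paper's Lemma~\ref{lem:threshold-larger-than-optimal}), and the chain $W(q,p)\le\expec[w(n(a))]\le w(\expec[n(a)])$ via the rearrangement bound on costs and Jensen --- but it stops exactly where the real work begins. The missing ingredient is the paper's Lemma~\ref{lem:expec-move-larger-itilq}: in \emph{every} equilibrium, $\expec[n(a)]=|\ione|+\expec[X]\ge\lihat(q)$. With that lemma the case you flag as problematic essentially disappears, and the theorem follows at once from the monotonicity of $w$ on $[\lihat(q),N]$. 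You do not prove this; instead you split on whether $\expec[n(a)]\ge d$ and leave the complementary case as an acknowledged open step (``one must \ldots pin down the distribution of $n(a)$ and show that the mass \ldots is exactly offset''), which is a description of a hoped-for argument rather than a proof. So the proposal has a genuine gap.

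The way the paper closes it is close to your own partial computation but aimed differently. Let $j$ be the \emph{smallest}-index agent in $\izero\cup\imix$, so that agents $1,\dots,j-1$ move surely and $|\ione|\ge j-1$. The equilibrium condition of $j$ (staying weakly preferred, or indifference) gives $\expec[qF(|\ione|+X^{(j)}+1)]\le r(j)$; stochastic dominance plus convexity of $G(n)=qF(n)$ --- Jensen applied in the opposite direction from the one you use on $w$ --- then yield $qF(|\ione|+\expec[X]+1)\le r(j)$. One shows $j\le\uihat(q)$ by contradiction, whence $qF(\expec[n(a)]+1)\le r(\uihat(q))\le qF(\uihat(q))$, and monotonicity of $F$ together with the definitions of $\lihat$ and $\uihat$ forces $\expec[n(a)]\ge\lihat(q)$. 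Your inequality $\expec[n(a)]>d-1+p_{i^\dagger}$ is the same Jensen step applied to an arbitrary non-surely-moving agent $i^\dagger\le d$; the extra leverage comes from choosing the minimal such agent and comparing against $\uihat(q)$ rather than against $d$. Be warned that even this route needs care at boundary beliefs where a low-index agent is exactly indifferent and elects to stay --- precisely the window you identified --- so the strict versus weak inequalities in the definitions of $\lihat$ and $\uihat$ must be tracked carefully if you complete the argument this way.
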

The preceding theorem has the following implications: First, the expected social welfare is always maximized at a pure equilibrium $(q, \lihat(q))$: no agent strictly randomizes between moving and staying. Second, under the optimal {\em public} signaling mechanism $(\Sigma, \phi)$, for any signal $s$ and the induced common belief $q \in [0,1]$, it must be that the resulting equilibrium among the agents is $(q, \lihat(q))$. For otherwise, one can always (publicly) recommend the agents to move to this equilibrium and improve the expected social welfare. In the next section, we use these two facts to completely characterize the optimal public signaling mechanism as a solution to linear program.

\subsection{Optimal Public Signaling Mechanism}

The results in the preceding section imply the following structure for the optimal public signaling mechanism: For each $\theta$, the principal (publicly) recommends the threshold $i$ number of agents to move to location $\ell_1$, with the constraint that, under the induced common belief $q$ upon receiving the recommendation $i$, it must be the case that $\lihat(q) = i$. This follows from an argument similar to the revelation-principle style argument for the private signaling mechanism, with the modification where the condition $\lihat(q) = i$ plays the same role of the persuasive constraints. We omit the details due to its similarity to that of the private signaling case.

Thus, the public signaling mechanism can be described by choosing the signal set to be $\Sigma = \{0,\cdots, N\}$ and $\phi : \Theta \to \Delta(\Sigma)$, where $\phi(i|\theta)$ denotes the probability that the principal recommends the first $i$ agents to move to $\ell_1$ when the resource level is $\theta$. The common belief upon sending a public signal $s=i$ is then given by $q_i = \mu(1)\phi(i|1)/ (\mu(1) \phi(i|1) + \mu(0)\phi(i|0))$. The equilibrium constraint is thus $\lihat(q_i) = i$ for each $i \in \Sigma$. Note that the definition of $\lihat(q)$ then implies that 
\begin{align}\label{eq:lihat-cond}
    q_i F(i) - r(i) &> 0\\
    q_i F(i+1) - r(i+1) &\leq 0.
\end{align}
Finally, the expected social welfare is given by $\sum_{\theta = 0,1}  \mu(\theta)  \sum_{i=0}^N \phi(i|\theta) W(\theta, i)$, where $W(\theta,i) = \theta iF(i) - \sum_{j \leq i} r_j$. 
Taken together, using the fact that $q_i = \mu(1)\phi(i|1)/ (\mu(1) \phi(i|1) + \mu(0)\phi(i|0))$ and letting $\phi(\theta, i) = \mu(\theta)\phi(i|\theta)$, we obtain the following LP in $\phi$:
\begin{equation}
\label{eq:public-lp}
\begin{split}
\max_{\phi} \quad  \sum_{\theta =0, 1}\sum_{i =0}^N  W(\theta, i) \cdot \phi( \theta,i)  \\
\text{s.t.} \quad  \left(F(i) - r(i)\right)\cdot \phi(1,i) - r(i)\cdot \phi(0,i) &\geq 0, \qquad \text{for all $i \in \{1, \cdots, N\}$;} \\
 \left(F(i+1) - r(i+1)\right)\cdot \phi(1,i) -r(i+1)\cdot\phi(0,i) &\leq 0, \quad \text{for all $i \in \{0, \cdots, N-1\}$;} \\
 \sum_{i =0, \ldots, N} \phi(\theta, i) = \mu(\theta), \quad 
 \phi(\theta, i) &\geq 0, \quad \text{for $\theta =0, 1$,  for all $i \in \{0, \cdots, N\}$.}
\end{split}
\end{equation}
In \eqref{eq:public-lp}, the objective is the expected social welfare under $\phi$. The first two sets of constraints encode \eqref{eq:lihat-cond}, ensuring that each signal $i$ indeed induces the corresponding threshold equilibrium. (Note that since $W(\theta, i)$ is decreasing in $i$, the strict inequality in \eqref{eq:lihat-cond} can be replaced by the weaker inequality without loss of optimality.) The last constraints ensure $\phi(\cdot |\theta)$ is a valid probability distribution for each $\theta$. Note that the preceding LP has $2(N+1)$ variables and $2(N+1)$ constraints, and thus can be efficiently solved in polynomial time.

\textbf{Remark}. With a similar argument as that in \cite{kamenica2011bayesian}, one can show the optimal public signaling mechanism sends at most two signals with positive probabilities. That is, the optimal public signaling mechanism randomizes over two thresholds, with different weights for $\theta=0$ and $\theta=1$. Our results in this section can be easily extended to allow for larger state space $\Theta$, with cardinality any finite $K$. In this case, the optimal public signal is the solution to a linear program with $O(NK)$ number of variables and constraints, and the optimal public scheme sends at most $K$ signals with positive probability.

\section{Computational Results}

\begin{figure}[h!]
     \centering
     \begin{subfigure}[b]{\textwidth}
         \centering
         \includegraphics[width=\textwidth]{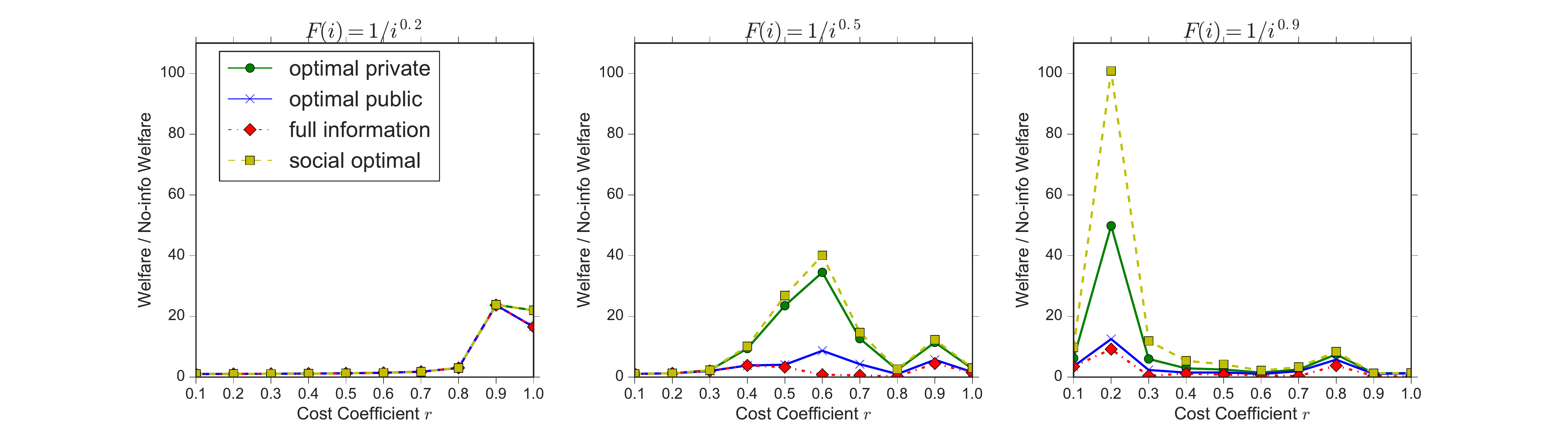}
         \caption{Constant costs: $r(i)=0.5r$.}
         \label{fig:ratio-constant}
     \end{subfigure}
     \begin{subfigure}[b]{\textwidth}
         \centering
         \includegraphics[width=\textwidth]{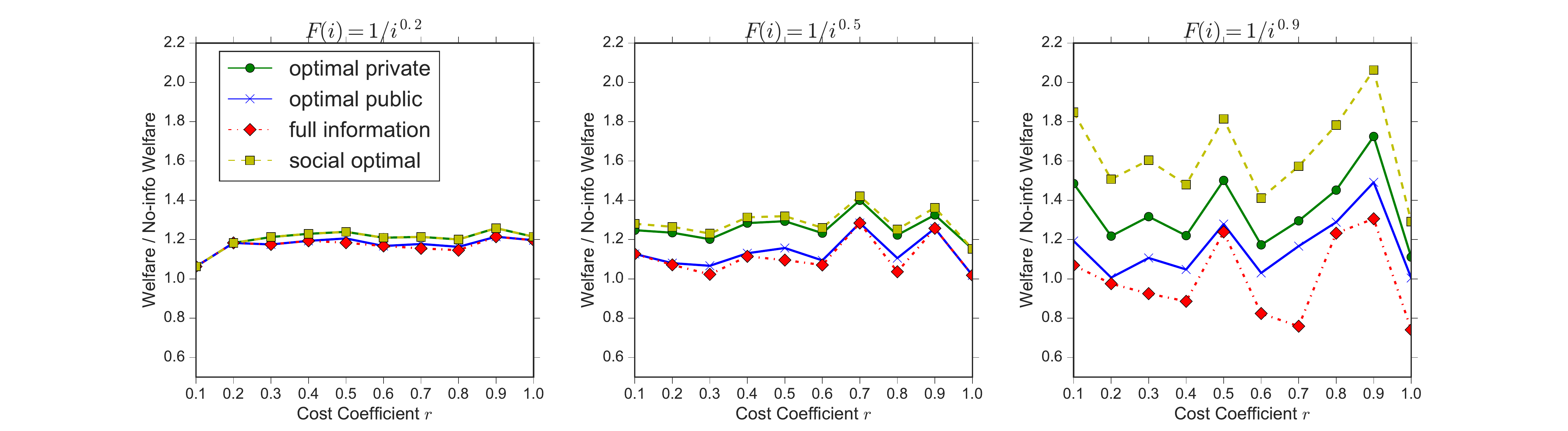}
         \caption{Linear costs: $r(i)=0.1ri$.}
         \label{fig:ratio-linear}
     \end{subfigure}
     \begin{subfigure}[b]{\textwidth}
         \centering
         \includegraphics[width=\textwidth]{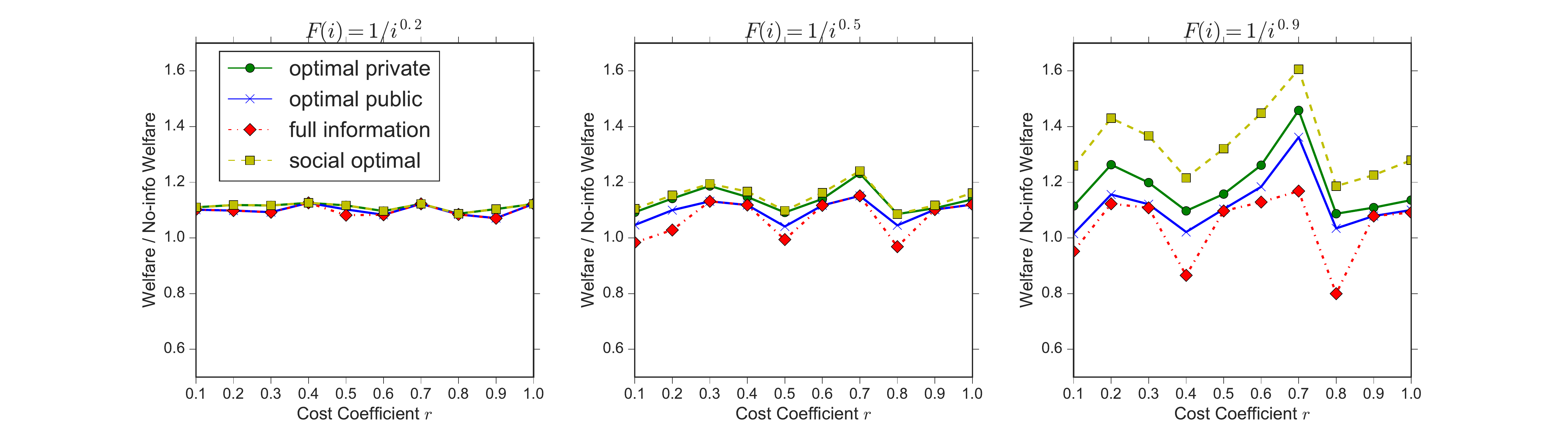}
         \caption{Quadratic costs: $r(i)=0.02ri^2$.}
         \label{fig:ratio-quad}
     \end{subfigure}
        \caption{Social welfare of the optimal signaling mechanisms and the benchmarks. $x$-axis is the cost coefficient $r$, and $y$-axis is the social welfare of different mechanisms/benchmarks over social welfare under no information sharing. For all cases, the total number of agents $N=20$ and the prior is $\mu(0)=0.2$, $\mu(1)=0.8$.}
  \label{fig:social-welfare-numeric}
\end{figure}

In this section, we compare numerically the social welfare under the optimal private and public signaling mechanisms with three benchmarks: the no-information benchmark, the full-information benchmark, and the social optimal benchmark, where all agents choose the social optimal action, that is, all agents stay when $\theta=0$, and the first $i^*$ agents move when $\theta=1$ and $i^*$ is defined earlier as the largest maximizer of $\tilde{W}(n) = nF(n) - \sum_{i=1}^n r(i)$. Note the third benchmark may not be achievable through a signaling mechanism, but rather provides an upper bound for the social welfare any signaling mechanism can generate.

We consider three different resource sharing function $F(i)=1/i^\alpha$ for $\alpha=0.2, 0.5, 0.9$. The parameter $\alpha$ controls the curvature of the resource sharing function, representing different resource sharing scenarios. We consider three different cost structures, constant costs where $r(i)=0.5r$, linear costs where $r(i)=0.1ri$ and quadratic costs where $r(i)=0.02ri^2$, where $r$ is the cost coefficient and we vary $r$ to represent different cost functions. We compute the social welfare of the optimal mechanisms and the benchmarks for different resource sharing and cost functions and plot the results in Figure~\ref{fig:social-welfare-numeric}.

Numerical results in Figure~\ref{fig:social-welfare-numeric} show that the optimal private signaling mechanism generates a social welfare close to the social optimal benchmarks in most cases, and the optimal public signaling mechanism also generates social welfare higher than the full information or no information benchmark in most scenarios. The benefit of private signaling is high when agents have similar costs, as illustrated by the subplots in the first column. \cite{arieli2019private} and \cite{candogan2017optimal} show that when agents' actions have positive externalities, the optimal private mechanism should correlate the recommendations to take the positive action for every agent as much as possible. Here in our setting where agents' actions have negative externalities, a different phenomenon is observed: private signaling  de-correlates the agents' recommendations and generates more expected social welfare compared to public signaling where agents' actions are more correlated.

In Proposition~\ref{thm:private-condition}, we give an upper bound for $\mu(1)$ under which the social optimal benchmark is achievable via private signaling. We also study numerically how much social welfare can be generated when this condition does not hold, by considering different priors. We presents these results in Appendix~\ref{ap:numeric-mu}.

\section{Discussion}
\label{sec:conclusion-signal}

We consider information design in a two location resource competition model. For private signaling, we provide a method to compute the optimal mechanism in polynomial time, and also characterize a condition of model parameters under which the optimal mechanism has a simple threshold structure. For public signaling mechanisms, we characterize the structure of the socially optimal equilibrium and establish the form of the optimal public signaling mechanism. Numerical results show the optimal private and public signaling mechanisms increase the social welfare substantially compared with the no-information and full-information setting.

Readers may have noticed that our method for obtaining the optimal private signaling mechanism does not restrict to the resource sharing setting we are considering. In fact, it is applicable to all settings where the externality of each agent's action is anonymous (the utility of an agent depends only on how many other agents are taking the same action, but not on which agents are taking this action). Formally, for information design settings with binary action spaces where receiver $i$'s utility has form $f_i\left(a_i, \sum_{j \ne i} a_j\right)$ and the sender's utility has form $f\left(\sum_i a_i\right) + \sum_i g_i(a_i)$, our method can be used to obtain the optimal private signaling mechanism in polynomial time. Such utility functions can be found in many settings, for example, in most voting settings where the sender does not differentiate across voter.

Going beyond our model, in many spatial settings, there are more than two locations. The form of signals and the action and belief structure of the agents can be much more complicated. Fully solving the optimal signaling mechanism in such a system may be computationally intractable, while it is hopeful heuristic mechanisms with theoretical guarantee or good empirical performance on the social welfare generated may exist and is an interesting future research path. Meanwhile, in most settings, agents migrate among locations so signaling is not ``one-shot'' but rather a sequential and dynamic process, where signals sent at a time impact the actions and beliefs of agents ever since hence affect signals that should be sent later. The Bayesian persuasion framework does not extend naturally to these settings. Building models and analyze such dynamic signaling process is an important future path for better understanding real world spatial signaling problems.


\appendix

\section{Proofs}
\label{ap:proofs}

\begin{proof}[\textbf{Proof of Lemma~\ref{lem:no-move-zero-theta}}] 
$ $\newline

\noindent
Let $\phi$ be an arbitrary persuasive straightforward signaling scheme, where for any $S \subseteq [N]$, $\phi(S | \theta)$ is the probability of recommending agents in $S$ to move, condition on resource state is $\theta$. We construct another straightforward signaling scheme $\phi'$ where $\phi'(S | 1) = \phi(S|1)$ for all $S  \subseteq [N]$; $\phi'(\varnothing | 0) = 1$ and  $\phi'(S | 0) = 0$ for all $S \subseteq [N]$ and $ S \ne \{\varnothing\}$. We would show $\phi'$ is also persuasive and achieves at least the same social welfare as $\phi$. The expected social welfare under $\phi$ is
\begin{align*}
& \sum_{\theta=0,1, S \subseteq [N]} \mu(\theta) \phi(S | \theta) W(\theta, S)\\
= & \sum_{\theta=0,1, S \subseteq [N]} \phi(\theta, S) W(\theta, S) \\
=&  \sum_{S \subseteq [N]} \phi(1, S) W(1, S) + \sum_{S \subseteq [N]} \phi(0, S)W(0, S) \\
=& \sum_{S \subseteq [N]} \phi(1, S) W(1, S) - \sum_{S \subseteq [N]} \phi(0, S) \sum_{i \in S} r(i) \displaybreak[3]\\
\leq & \sum_{S \subseteq [N]} \phi(1, S) W(1, S) \\
= & \sum_{S \subseteq [N]} \phi'(1, S) W(1, S) + \phi'(0, \varnothing) W(0, \varnothing) \\
= & \sum_{\theta=0,1, S \subseteq [N]} \phi'(\theta, S) W(\theta, S), 
\end{align*}
which is the social welfare under $\phi'$. 

Next, we show $(\Sigma, \phi')$ satisfies the persuasive constraints \eqref{eq:move-persuasive} and \eqref{eq:stay-persuasive}. For each agent $i$, since $\phi'(S | 0)=0$ for $S \ne \varnothing$ and $\phi'(S | 1)= \phi(S | 1)$ for any subset $S$, we have
\begin{align*}
& \sum_{\theta, S: i \in S} \phi'(\theta, S)(\theta F(|S|) - r(i))  \\
= & \sum_{S: i \in S} \phi'(1, S)(F(|S|) - r(i))  \\
= & \sum_{S: i \in S} \phi(1, S)(F(|S|) - r(i)) \displaybreak[3]\\
\geq &  -r(i) \sum_{S: i \in S} \phi(0, S) + \sum_{S: i \in S} \phi(1, S)(F(|S|) - r(i)) \\
= & \sum_{\theta, S: i \in S} \phi(\theta, S)(\theta F(|S|) - r(i)) \\
\geq & 0,
\end{align*}
where the last inequality is because $\phi$ is persuasive. Therefore we have shown \eqref{eq:move-persuasive} holds for agent $i$.

On the other hand, for each agent $i$, we have
\begin{align*}
 & \sum_{\theta, S: i\notin S} \phi'(\theta, S)(\theta F(|S|+1) - r(i)) \\
= & -r(i) \sum_{S: i \notin S} \phi'(0, S) + \sum_{S: i\notin S} \phi'(1, S)( F(|S|+1) - r(i)) \\
= & -r(i)\mu(0) + \sum_{S: i\notin S} \phi(1, S)( F(|S|+1) - r(i)) \\
\leq & -r(i) \sum_{S: i \notin S}  \phi(0, S) + \sum_{S: i\notin S} \phi(1, S)( F(|S|+1) - r(i)) \\
= & \sum_{\theta, S: i\notin S} \phi(\theta, S)(\theta F(|S|+1) - r(i)) \\ 
\leq &  0,
\end{align*}
where the first inequality is because $\sum_{S: i \notin S} \phi(0, S) \leq \mu(0)$, and the last inequality holds because $\phi$ is persuasive. Thus we have shown \eqref{eq:stay-persuasive} holds for agent $i$.

\end{proof}

\begin{proof}[\textbf{Proof of Lemma~\ref{lem:reform-lp}}] 
$ $\newline

\noindent
Let $\setS$ be the random subset of agents recommended to move under $\phi$. For each agent $i$, let $X_i = \mathds{1}\{i \in \setS\}$ be the random variable denoting that agent $i$ is recommended to move.  Let $\prob$ and $\expec$ denote the probability measure and expectation induced by $\phi$.

Since $p_{ik} = \prob(|\setS|=k, i \in \setS | \theta=1) = \prob(|\setS|=k | \theta=1) \prob(i \in \setS | |\setS|=k, \theta=1)$, 
and
$$k = \expec\left[\sum_{i=1}^N X_i \bigg| |\setS|=k, \theta=1 \right] = \sum_{i=1}^N \expec[X_i | |\setS|=k, \theta=1]  = \sum_{i=1}^N\prob(i \in \setS | |\setS|=k, \theta=1),$$
we have 
$$ \sum_{i=1}^N p_{ik} = \prob(|\setS| = k | \theta=1 ) \sum_{i=1}^N \prob(i \in \setS | |\setS|=k, \theta=1) = k \prob(|\setS|=k | \theta=1),$$
and therefore $\prob(|\setS|=k | \theta=1) = \sum_{i=1}^N p_{ik} / k$.
 
The objective \eqref{eq:old-objective} of \ref{eq:lp-private} can be written as
\begingroup
\allowdisplaybreaks[1]
\begin{align*}
&  \mu(0) \sum_{ S \subseteq [N]} \phi(S |\theta) W(0, S) +  \mu(1) \sum_{ S \subseteq [N]} \phi(S |\theta) W(1, S) \\
= & \mu(0)\phi(\varnothing | 0)W(0, \varnothing) + \mu(1)\sum_{ S \subseteq [N]} \phi(S|1) \left(|S| F(|S|) -\sum_{i \in S}r(i) \right) \\
= & \mu(1) \sum_{k=1}^N \sum_{S :|S|=k} \phi(S|1) \left(k F(k) -\sum_{i \in S}r(i) \right)\\
= & \mu(1)\sum_{k=1}^N kF(k)\sum_{S: |S|=k} \phi(S|1) - \mu(1) \sum_{k=1}^N \sum_{S : |S|=k} \phi(S|1) \sum_{i \in S} r(i) \\ 
= & \mu(1)\sum_{k=1}^N kF(k)\prob(|\setS|=k | \theta=1) - \mu(1) \sum_{k=1}^N  \sum_{i=1}^N r(i) \sum_{S: |S|=k, i\in S} \phi(S|1) \\
= &\mu(1) \sum_{k=1}^N kF(k) \sum_{i=1}^N \frac{p_{ik}}{k} - \mu(1) \sum_{k=1}^N \sum_{i=1}^N r(i) p_{ik} \\
= & \mu(1) \sum_{k=1}^N \sum_{i=1}^N p_{ik}(F(k) - r(i)).
\end{align*}
 
Next, we write the persuasive constraints in terms of the $p_{ik}$'s. For each agent $i$, the left hand side of \eqref{eq:move-persuasive} can be written as 
\begin{align*}
& \sum_{\theta=0,1} \mu(\theta) \sum_{ S: i\in S} \phi(S | \theta)(\theta F(|S|) - r(i))\\
= & \mu(1) \sum_{S : i \in S} \phi(S | 1)(F(|S|)- r(i)) \\
= & \mu(1) \sum_{k=1}^N \sum_{S : |S|=k, i \in S} \phi(S|1)(F(k)-r(i)) \\
= & \mu(1) \sum_{k=1}^N(F(k)-r(i)) \sum_{S : |S|=k, i \in S} \phi(S|1)\\
= & \mu(1) \sum_{k=1}^N p_{ik}(F(k)-r(i)).
\end{align*}
Therefore for agent $i$, constraint \eqref{eq:move-persuasive} is equivalent as
\begin{align*}
\sum_{k=1}^N p_{ik}(F(k) - r(i)) \geq 0.
\end{align*}
On the other hand, for each agent $i$, the left hand side of \eqref{eq:stay-persuasive} can be written as 
\begin{align*}
& \sum_{\theta=0,1} \mu(\theta) \sum_{S: i\notin S} \phi(S | \theta)(\theta F(|S|+1) - r(i)) \\
= & -\mu(0)r(i) + \mu(1)\sum_{S: i\notin S} \phi(S | 1)( F(|S|+1) - r(i)) \\
= &  -\mu(0)r(i) + \mu(1) \sum_{k=0}^{N-1} \sum_{S: |S|=k,  i\notin S} \phi(S|1) (F(k+1) - r(i)) \\
= & -\mu(0) r(i) + \mu(1)\sum_{k=0}^{N-1} \left( \sum_{S: |S|=k} \phi(S|1) (F(k+1) - r(i)) - \sum_{S: |S|=k, i \in S} \phi(S|1) (F(k+1) - r(i)) \right) \\
= & -\mu(0) r(i) + \mu(1)\sum_{k=0}^{N-1} (F(k+1) - r(i)) \sum_{S: |S|=k} \phi(S|1)  - \mu(1)\sum_{k=1}^{N-1}  (F(k+1) - r(i)) \sum_{S: |S|=k, i \in S} \phi(S|1) \\
= & -\mu(0) r(i) + \mu(1)\sum_{k=0}^{N-1} (F(k+1) - r(i))\prob(|\setS|=k | \theta=1) - \mu(1)\sum_{k=1}^{N-1}  (F(k+1) - r(i)) p_{ik} \\
= & -\mu(0) r(i) - \mu(1)\sum_{k=1}^{N-1}  (F(k+1) - r(i)) p_{ik}  + \mu(1)\sum_{k=1}^{N-1} (F(k+1) - r(i))\prob(|\setS|=k | \theta=1) \\
& + \mu(1)(F(1) - r(i))\prob(|\setS|=0 | \theta=1) \\
=& -\mu(0) r(i) + \mu(1)\sum_{k=1}^{N-1}  (F(k+1) - r(i))(\prob(|\setS|=k | \theta=1) - p_{ik}) \\
& + \mu(1)(F(1) - r(i))\left(1  -\sum_{k=1}^N \prob(|\setS|=k|\theta=1) \right) \\
= & -\mu(0) r(i) + \mu(1)\sum_{k=1}^{N-1}  (F(k+1) - r(i))\left(\sum_{i=1}^N \frac{ p_{ik}}{k}  - p_{ik}\right) \\
& + \mu(1)(F(1) - r(i))\left(1  -\sum_{k=1}^N \sum_{i=1}^N \frac{ p_{ik}}{k}  \right) \\
= & -\mu(0) r(i) +\mu(1) \left(\left(1 - \sum_{k=1}^N \frac{1}{k}\sum_{j=1}^N p_{jk}\right)(F(1) - r(i))  +\sum_{k=1}^{N-1}\left(\frac{1}{k}\sum_{j=1}^N p_{jk} -p_{ik}\right)(F(k+1) - r(i)) \right.
\end{align*}
Therefore for agent $i$, constraint \eqref{eq:stay-persuasive} is equivalent as
\begin{align*}
-\mu(0)r(i) + \mu(1) \left(\left(1 - \sum_{k=1}^N \frac{1}{k}\sum_{j=1}^N p_{jk}\right)(F(1) - r(i))  +\sum_{k=1}^{N-1}\left(\frac{1}{k}\sum_{j=1}^N p_{jk} -p_{ik}\right)(F(k+1) - r(i)) \right) \leq 0
\end{align*}

\end{proof}

\begin{proof}[\textbf{Proof of Lemma~\ref{lem:marginal-conditions}}] 
$ $\newline

\noindent
For any $k > 0$, \cite{tille1996elimination} presents a sequential elimination subroutine that eliminates one agent from agents $1, \ldots, N$ at each step, and outputS the remaining $k$ agents, ensuring the probability each agent $i$ is included in the sample is $kp_{ik}/\sum_{j=1}^N p_{jk}$, so long as \eqref{eq:valid-marginal} holds for $p_{ik}$'s.

Specifically, at each step $n=N-1, N-2, \ldots, k$, this subroutine computes $\pi(i | n)$ for each $i \in [N]$ in the following way. It first computes for each $i$ the quantity
$$ \frac{np_{ik}}{\sum_{j=1}^N p_{jk}} .$$
For each agent $i$ such that this quantity is larger than 1, it assigns this quantity to $\pi(i | n)$. It then recomputes this quantity for remaining agents, with the multiplier in the numerator being the number of remaining agents instead of $n$. This process is repeated until for all $i \in [N]$, $\pi(i| n)$ is in $[0, 1]$. After this process, some $\pi(i | n)$ is equal to 1 and others are strictly proportional to $p_{ik}$.

This subroutine then computes for each $i \in [N]$
\begin{align*}
    r_{ni} = \begin{cases}
     1 - \pi(i|n), & \text{if $n=N-1$;}\\
    1 - \frac{\pi(i|n)}{\pi(i | n+1)}, & \text{if $n < N-1$.}
    \end{cases}
\end{align*}

For each $n$, let $S_n$ be the set of remaining agents at the beginning of step $n$, it can be verified $\{r_{ni}\}_{i \in S_n}$ is a valid probability distribution, and this subroutine samples one agent according to this distribution and eliminate this agent from the pool. If \eqref{eq:valid-marginal} holds, this subroutine ensures after the final step $k$, the probability each agent $i$ is included in the sample is $kp_{ik}/\sum_{j=1}^Np_{jk} = q_{ik}$.

The rest of the proof of this lemma is given in the main paper.

\end{proof}

\endgroup

\begin{proof}[\textbf{Proof of Proposition~\ref{thm:private-condition}}] 
$ $\newline

\noindent
First we show $F(i^*) \geq r(i^*)$. This claim holds trivially when $i^*=0$. Assume $i^* \geq 1$ and assume for contradiction $F(i^*) < r(i^*)$. We have
\begin{align*}
& \tilde{W}(i^*) - \tilde{W}(i^*-1) \\
= &  i^*F(i^*) - \sum_{j=1}^{i^*} r(j) - \left(  (i^*-1)^*F(i^*-1) - \sum_{j=1}^{i^*-1} r(j) \right) \\
= & (i^*-1)(F(i^*) - F(i^*-1)) + F(i^*) - r(i^*) \\
\leq & F(i^*) - r(i^*) \\
< & 0.
\end{align*}
Therefore $i^*$ is not maximizer of $\tilde{W}$, leading to a contradiction.

Let $x$ be the probability distribution over the subsets of $[N]$ such that $x(\{1, \ldots, i^*\})=1$ and $x(S)=0$ for any other $S \subseteq [N]$. To show recommending the agents to follow the social optimal strategy profile is persuasive, it suffices to show $x$ satisfies the constraints of the linear program \ref{eq:lp-private}.

For agent $i \leq i^*$, $x(S) = 0$ for all $S$ such that $i \notin 
S$ so the second constraint in \ref{eq:lp-private} is satisfied. Also,
\begin{align*}
\sum_{S: i \in S} x(S)(F(|S|)  -r(i)) =  x(\{1, \ldots, i^*\})(F(i^*) - r(i))
 =  F(i^*) - r(i)  
 \geq  F(i^*) - r(i^*) \geq 0,
\end{align*} 
hence the first constraint is also satisfied.

On the other hand, for agent $i > i^*$, $x(S) = 0$ for all $S$ such that $i \in S$ so the first constraint in \ref{eq:lp-private} is satisfied. Also,
\begin{align*}
& \sum_{S: i \notin S} x(S)(F(|S|+1)  -r(i)) \\
 = & F(i^*+1) - r(i) \\
\leq &  F(i^*+1) - r(i^*+1) \displaybreak[3]\\
\leq & \frac{r(i^*+1)}{\mu(1)} - r(i^*+1) \\
=  &\frac{\mu(0)}{\mu(1)}r(i^*+1)\\
 \leq &\frac{\mu(0)}{\mu(1)}r(i),
\end{align*}
so the second constraint is satisfied. Therefore $x$ is persuasive. 

The social welfare corresponding to $x$ is $\mu(1)\tilde{W}(i^*)$. Note that for any $S \subseteq [N]$, 
$$ |S|F(|S|) - \sum_{i \in S} r(i) \leq |S|F(|S|) - \sum_{i=1}^{|S|} r(i) = \tilde{W}(|S|) \leq \tilde{W}(i^*).$$
Therefore for any signaling scheme $\phi$, the social welfare with respect to $\phi$ is 
\begin{align*}
& \mu(0) \sum_{S \subseteq\{[N]\}} \phi(S | 0) \left(-\sum_{i \in S} r(i) \right)+  \mu(1)\sum_{S  \subseteq [N]} \phi(S | 1)\left(|S|F(|S|) - \sum_{i \in S} r(i)\right) \\
 \leq & \mu(1) \sum_{S  \subseteq [N]} \phi(S|1) \tilde{W}(i^*) \\ \leq & \mu(1)\tilde{W}(i^*),
 \end{align*}
hence $x$ gives the optimal signaling mechanism.

\end{proof}

\begin{proof}[\textbf{Proof of Proposition~\ref{thm:public-mixed-monotone}}]
$ $\newline

\noindent
Since agent $i$ is randomizing between moving and staying in equilibrium, assuming all other agents follow the given strategy profile, she should be indifferent in moving and staying, hence the expected payoff for agent $i$ choosing to move should be equal to $r(i)$. Let $N_{-ij}$ be the number of agents who choose to move besides agent $i$ and $j$, we have 
\begin{align*}
& q(p_j\expec[F(N_{-ij}+2)] + (1-p_j)\expec[F(N_{-ij}+1)]) = r(i) \\
\Rightarrow & p_j = \frac{\expec[F(N_{-ij}+1)]- r(i)/q}{\expec[F(N_{-ij}+1) - F(N_{-ij}+2)] }.
\end{align*}
Similarly, agent $j$ is indifferent in moving and staying given other agents' strategies, which gives 
\begin{equation}
p_i = \frac{\expec[F(N_{-ij}+1)]- r(j)/q}{\expec[F(N_{-ij}+1) - F(N_{-ij}+2)] }.
\end{equation}
Since $F$ is decreasing, $\expec[F(N_{-ij}+1) - F(N_{-ij}+2)] > 0$ and $r(i) \leq r(j)$ implies $p_i \leq p_j$.
\end{proof}

\begin{proof}[\textbf{Proof of Lemma~\ref{lem:threshold-equilibrium}}]

$ $\newline

\noindent
For any $t \in [\lihat(q), \uihat(q)] $, let $p = t + 1 - \lceil t \rceil$. We consider the utility for each agent $i$ choosing to move assuming all other agents follow their strategy in the threshold equilibrium.

For $i < \lceil t \rceil $, agent $i$' s expected utility for moving is 
\begin{align*}
q(pF( \lceil t \rceil ) + (1-p)F(\lceil t \rceil -1)) \geq qF( \lceil t \rceil )\geq F(\uihat(q)) \geq r(\uihat(q)) \geq r(i),
\end{align*}
hence she has no incentive to alter her strategy in the threshold equilibrium. 

For agent $\lceil t \rceil $, her expected utility for moving is $qF(\lceil t \rceil )$. If $t > \lihat(q)$, we have $\lihat(q) < \lceil t \rceil \leq \uihat(q)$, and by definition of $\lihat(q)$ and 
$\uihat(q)$, we have $qF(\lceil t \rceil ) = r(\lceil t \rceil)$ so agent $\lceil t \rceil $ would not alter her strategy. On the other hand, if $t =\lihat(q)$, by definition of $\lihat(q)$ we know agent $\lceil t \rceil $ would also choose to move, same as her strategy in the threshold equilibrium.

Finally, for agent $i > \lceil t \rceil $, her expected utility for moving is $q(pF( \lceil t \rceil +1) + (1-p)F( \lceil t \rceil ))$. In the case $t > \lihat(q)$, this utility is upper bounded by $qF( \lceil t \rceil ) \leq qF(\lihat(q)+1)$; and in the case $t = \lihat(q)$, we have $p=1$ and the utility for moving is also $qF(\lihat(q)+1)$. We have
$ qF(\lihat(q)+1) \leq r(\lihat(q)+1) \leq r(i)$ so agent $i$ would not alter her equilibrium strategy, which completes the proof for (i).

The ``only if'' part is straightforward and we omit the details. 
\end{proof}

\begin{proof}[\textbf{Proof of Theorem~\ref{thm:threshold-optimal}}]
\label{ap:appendix-a}

$ $\newline

\noindent
We first prove the following lemma.

\begin{lemma}
\label{lem:threshold-larger-than-optimal}
$W(q, n)$ is concave in $n$ and $W(q, \lihat(q)) \geq W(q, \lihat(q)+1)$.
\end{lemma}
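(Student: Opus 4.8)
The plan is to prove both assertions of Lemma~\ref{lem:threshold-larger-than-optimal} directly from Assumption~1 and the definition of $\lihat(q)$, after first recording the explicit form of $W(q,n)$ for an integer threshold $n$. Recall that for the pure threshold profile in which exactly the first $n$ agents move, we have $n(a)=n$ deterministically, so $W(q,n) = q\,nF(n) - \sum_{i=1}^{n} r(i)$; I will use this expression throughout, writing $R(n) \defeq \sum_{i=1}^{n} r(i)$ with $R(0)=0$.

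For the concavity claim, I would argue via second (forward) differences of the sequence $n \mapsto W(q,n)$ on $\{0,1,\dots,N\}$. The sequence $n\mapsto nF(n)$ is concave by Assumption~1, and multiplying by $q\in[0,1]$ preserves this. For the cost term, $R(n)-R(n-1) = r(n)$ is nondecreasing in $n$ because $r$ is increasing (and $r(0)=0\le r(1)$), so $R$ is convex and $-R$ is concave. Hence $W(q,\cdot)$ is a sum of two concave sequences and is concave; equivalently, the first difference
\[
W(q,n+1) - W(q,n) = q\bigl((n+1)F(n+1) - nF(n)\bigr) - r(n+1)
\]
is nonincreasing in $n$, since the bracketed term is nonincreasing (concavity of $nF(n)$) and $r(n+1)$ is nondecreasing.

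For the inequality $W(q,\lihat(q))\ge W(q,\lihat(q)+1)$, set $\ell=\lihat(q)$ and assume $\ell<N$ (if $\ell=N$ there is no threshold $n=\ell+1$ and the claim is vacuous; the degenerate case $q=0$ is also immediate since $W(0,\cdot)=-R$ is nonincreasing). Using the first-difference formula above at $n=\ell$, and the fact that $F$ is decreasing so that $F(\ell)\ge F(\ell+1)$, we get
\[
(\ell+1)F(\ell+1) - \ell F(\ell) = F(\ell+1) - \ell\bigl(F(\ell) - F(\ell+1)\bigr) \le F(\ell+1),
\]
hence $W(q,\ell+1) - W(q,\ell) \le qF(\ell+1) - r(\ell+1)$. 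Finally, since $\lihat(q) = \max\{i : qF(i)-r(i) > 0\}$ and $\ell+1 > \lihat(q)$, we have $qF(\ell+1) - r(\ell+1) \le 0$, so $W(q,\ell+1) - W(q,\ell)\le 0$, which is the claim.

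I do not anticipate a genuine obstacle: the only care needed is bookkeeping at the boundary (the case $\ell=N$, and $q=0$, where the strict inequality in the definition of $\lihat$ makes the defining set empty so $\lihat(0)=0$ by convention), and phrasing concavity as concavity of a sequence (nonincreasing differences) rather than of a function on the reals. The substantive content — that $\lihat(q)$ lies weakly to the right of the integer maximizer of $W(q,\cdot)$ — is exactly the conjunction of the two assertions and is what feeds into the proof of Theorem~\ref{thm:threshold-optimal}.
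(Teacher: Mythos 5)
Your proposal is correct and follows essentially the same route as the paper: concavity of $W(q,\cdot)$ from concavity of $nF(n)$ together with the increasing costs, and the inequality $W(q,\lihat(q))\geq W(q,\lihat(q)+1)$ from the same first-difference computation, bounded using $F$ decreasing and $qF(\lihat(q)+1)\leq r(\lihat(q)+1)$ from the definition of $\lihat(q)$. Your extra care at the boundaries ($\lihat(q)=N$ and $q=0$) is fine but not a substantive difference.
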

\begin{proof}
Concavity is straightforward since $nF(n)$ is concave and $r(i)$'s are increasing in $i$ so $-\sum_{i=1}^n r(i)$ is also concave. To show $W(q, \lihat(q)) \geq W(q, \lihat(q)+1)$, we have
\begin{align*}
& W(q, \lihat(q)+1) - W(q, \lihat(q)) \\
=  &  q(\lihat(q)+1)F(\lihat(q)+1) - \sum_{i=1}^{\lihat(q)+1} r(i) - \left( q\lihat(q) F(\lihat(q)) - \sum_{i=1}^{\lihat(q)} r(i) \right) \\
=& q\lihat(q)(F(\lihat(q)+1) - F(\lihat(q))) + qF(\lihat(q)+1) - r(\lihat(q)+1) \\
\leq &  0,
\end{align*}
where the inequality is from $qF(\lihat(q)+1) \leq r(\lihat(q)+1)$ and $F$ is decreasing.
\end{proof}

Let $p=(p_1, \ldots, p_N)$ be an arbitrary equilibrium strategy profile under belief $q$. Define $\ione$, $\izero$ and $\imix$ as: $\ione \defeq \{1 \leq i \leq N: p_i=1\}$, $\izero \defeq \{1 \leq i \leq N: p_i=0\}$ and $\imix \defeq \{1 \leq i \leq N: 0 < p_i <1\}$, denoting the agents who moves, stays and randomizes between moving and staying. Recall $W(q, p)$ denote the principal's expected utility with respect to belief $q$ and  this strategy profile. We aim to show $W(q, p) \leq W(q, \lihat(q))$.

We have
\begin{align}
\label{eq:sender-utility-general}
W(q, p) = &\sum_{S \subseteq \imix} \prod_{j \in S} p_j \prod_{ j \in \imix \backslash S} (1-p_j) \left( q(|S| + |\ione|)F(|S| + |\ione|) - \sum_{j \in \ione \cup S} r(j)\right)\notag \\ 
\leq & \sum_{S \subseteq \imix} \prod_{j \in S} p_j \prod_{ j \in \imix \backslash S} (1-p_j) \left( q(|S| + |\ione|)F(|S| + |\ione|) - \sum_{j =1}^{|\ione| + |S|} r(j)\right) \displaybreak[3] \\
= & \sum_{n=0}^{|\imix|} \left( q(|\ione| + n)F(|\ione| + n) - \sum_{j=1}^{|\ione|+n}r(j) \right) \sum_{S \subseteq \imix: |S|=n} \prod_{j \in S} p_j \prod_{ j \in \imix \backslash S} (1-p_j)\notag \\
=& \sum_{n=0}^{|\imix|} W(q, |\ione| + n)  \sum_{S \subseteq \imix: |S|=n} \prod_{j \in S} p_j \prod_{ j \in \imix \backslash S} (1-p_j) . \notag
\end{align}

The inequality is from that $r(i)$'s are increasing in $i$. Let $X$ be the number of agents in $\imix$ that chooses to move. The right-hand side of the last equality of \eqref{eq:sender-utility-general} equals $\expec[W(q, |\ione|+X)]$. Since $W$ is concave, by Jensen's inequality we have 
\begin{equation}
\label{eq:jensen-uhat}
\expec[W(q, |\ione|+X)] \leq W(q, |\ione| + \expec[X]). 
\end{equation}
Therefore, $U(p_1, \ldots, p_N) \leq W(q, |\ione| + \expec[X])$. Next, we show $|\ione|  + \expec[X] \geq \lihat(q)$ in Lemma~\ref{lem:expec-move-larger-itilq}. 

\begin{lemma}
\label{lem:expec-move-larger-itilq}
 $|\ione|  + \expec[X] \geq \lihat(q)$.
\end{lemma}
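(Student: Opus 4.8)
The plan is to argue by contradiction, working with the (random) total number of movers $M$, so that $\expec[M] = |\ione| + \expec[X]$; write $\ell \defeq \lihat(q)$. First note that since $F$ is decreasing and $r$ is increasing, the map $i \mapsto qF(i) - r(i)$ is non-increasing on $\{1,\dots,N\}$, so $qF(\ell) - r(\ell) > 0$ forces $qF(i) > r(i)$ for every $1 \le i \le \ell$. Suppose, for contradiction, that $\expec[M] < \ell$. If all agents moved with probability one we would have $M = N \ge \ell$ surely, so there is a smallest index $i_0$ with $p_{i_0} < 1$, and agents $1,\dots,i_0-1$ all lie in $\ione$. If $i_0 > \ell$ then agents $1,\dots,\ell$ move with probability one, hence $M \ge \ell$ surely --- contradicting $\expec[M] < \ell$. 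Thus $i_0 \le \ell$, and in particular $qF(i_0) > r(i_0)$ and $r(i_0) \le r(\ell)$.

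The next step is to use agent $i_0$'s equilibrium condition. Since $p_{i_0} < 1$, agent $i_0$ does not strictly prefer to move, so her expected payoff for moving is non-positive: $q\,\expec[F(M_{-i_0}+1)] \le r(i_0)$, where $M_{-i_0}$ is the number of movers other than $i_0$. Convexity of $F$ and Jensen's inequality give $q\,\expec[F(M_{-i_0}+1)] \ge qF(\expec[M_{-i_0}]+1) = qF(\expec[M] - p_{i_0} + 1)$, hence $qF(\expec[M] - p_{i_0}+1) \le r(i_0)$. Whenever $\expec[M] \le \ell - 1$ --- in particular in every pure-strategy equilibrium, where $M$ is a constant $\le \ell-1$ --- we have $\expec[M] - p_{i_0} + 1 \le \ell$, so monotonicity of $F$ yields $qF(\ell) \le qF(\expec[M] - p_{i_0}+1) \le r(i_0) \le r(\ell) < qF(\ell)$, a contradiction. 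This settles the case $\expec[M] \le \ell - 1$.

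The remaining case $\ell - 1 < \expec[M] < \ell$ --- which forces $\imix \ne \varnothing$ --- is the one I expect to be the main obstacle. Here agent $i_0$'s marginal condition alone only yields $\expec[M] > i_0 - 1$, which need not reach $\ell$, so one must use the full equilibrium structure: iterate the (in)difference conditions over the agents $i_0, i_0+1, \dots, \ell$ (each either moves with probability one, or lies in $\imix \cup \izero$ and then satisfies $r(i) \ge qF(\expec[M] - p_i + 1)$ by the same Jensen step), and combine this with the monotonicity of mixed-strategy probabilities from Proposition~\ref{thm:public-mixed-monotone} --- within $\imix$ higher-cost agents move with higher probability --- to argue that the randomizing agents of index at most $\ell$ must carry enough moving mass to bring $\expec[M]$ up to $\ell$. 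Concavity of $W$ (Lemma~\ref{lem:threshold-larger-than-optimal}) is the natural device for bounding the residual Jensen slack in this narrow window; carrying that estimate through is the delicate part of the proof.
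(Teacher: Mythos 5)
Your first two paragraphs are sound and run parallel to the paper's own argument: the non-strict-preference condition of the lowest-indexed agent who does not surely move, Jensen's inequality applied to the convex decreasing map $n\mapsto qF(n)$, and the monotonicity of $r$. But the proof stops exactly where the difficulty begins: the case $\ell-1<\expec[M]<\ell$ is left as a sketch, and the sketch does not go through as described. Iterating the per-agent conditions only ever yields bounds of the form $\expec[M]>i-1+p_i$ for the various agents $i\le\ell$ with $p_i<1$, which never reaches $\ell$; Proposition~\ref{thm:public-mixed-monotone} orders the mixing probabilities but provides no lower bound on their total mass; and concavity of $W$ plays no role in this lemma (it is used in the surrounding theorem only \emph{after} this lemma is in hand). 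So the remaining case is a genuine gap, not a routine verification.

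The paper closes that case with two ingredients missing from your plan. First, instead of each agent's own count $M_{-i}$, it controls the single aggregate quantity $\expec[qF(M+1)]$: for $i\in\izero$ this is directly at most $r(i)$, and for $i\in\imix$ the indifference condition $\expec[qF(|\ione|+X^{(i)}+1)]=r(i)$ transfers to $M+1$ via the stochastic dominance $X^{(i)}\sdleq X$ and the monotonicity of $F$, giving $\expec[qF(M+1)]\le r(j)$ for $j\defeq\min(\izero\cup\imix)$ (your $i_0$); note this removes the $-p_{i_0}$ slack from your Jensen step. Second, and decisively, it brings in the \emph{upper} threshold $\uihat(q)$: it shows $j\le\uihat(q)$ (otherwise $|\ione|\ge\uihat(q)$ and $r(\uihat(q)+1)\le r(j)\le qF(|\ione|+1)\le qF(\uihat(q)+1)$, contradicting the definition of $\uihat(q)$), deduces $\expec[M]+1\ge\uihat(q)$, and then exploits the integer relationship between $\lihat(q)$ and $\uihat(q)$ (they coincide when $qF(\uihat(q))>r(\uihat(q))$, and otherwise $\uihat(q)\ge\lihat(q)+1$) to conclude $\expec[M]\ge\lihat(q)$. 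Working only with $\lihat(q)$, as your argument does, cannot manufacture the extra unit needed in the fractional window $(\ell-1,\ell)$.
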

\begin{proof}

Define $G(n) \defeq qF(n) , n \in [N]$. Since $G$ is convex, by Jensen's inequality, we have
\begin{equation}
\label{eq:g-convex}
G(|\ione| + \expec[X] + 1) \leq \expec[G(|\ione| + X+ 1)].
\end{equation}

For any agent $i \in \izero$, her expected utility if she chooses to move is $\expec[G(|\ione| + X+ 1)]$. Since she prefers staying, we have $\expec[G(|\ione| + X+ 1)] \leq r(i)$. On the other hand, for any agent $i \in \imix$, let $X^{(i)}$ denote the number of agents in $\imix$ besides agent $i$ who chooses to move. Agent $i$'s expected utility if she chooses to move is $\expec[G(|\ione| + X^{(i)} + 1)]$. Since agent $i$ is indifferent in moving and staying, we have $\expec[G(|\ione| + X^{(i)} + 1)] = r(i)$. Clearly $X^{(i)} \sdleq X$ where ``$\mathsf{sd}$'' denotes first-order stochastic dominance, and since $G$ is a decreasing function, we have $G(|\ione| + X+ 1) \sdleq G(|\ione| + X^{(i)} + 1)$. Therefore $\expec[G(|\ione| + X+ 1)] \leq \expec[G(|\ione| + X^{(i)} + 1)] = r(i)$. Summarizing the above, we have
\begin{equation}
\label{eq:expec-g-upper}
\expec[G(|\ione| + X+ 1)] \leq  \min_{i \in \izero \cup \imix} r(i).
\end{equation}
Let $j \defeq \min\izero \cup \imix$ and we show it must hold that $j \leq \uihat(q)$. Assume, for contradiction, $j > \uihat(q)$. First observe that since $j$ is the agent in $ \izero \cup \imix$ with the smallest index, it must hold $j \leq |\ione| + 1$, which implies $|\ione| \geq j-1 \geq \uihat(q)$. 
Meanwhile, if $j \in \izero$, we have $r(j) = \expec[G(|\ione| + X^{(j)} + 1)] \leq G(|\ione|+1)$; and if $j \in\imix$, we have $ r(j) \leq \expec[G(|\ione| + X + 1)] \leq G(|\ione|+1)$. Therefore, 
\begin{equation*}
r(\uihat(q)+1) \leq r(j) \leq   G(|\ione| + 1) \leq G(\uihat(q)+1) .
\end{equation*}
However by definition of $\uihat(q)$ it must be that $G(\uihat(q)+1) < r(\uihat(q)+1)$, which leads to a contradiction. 

From $j \leq \uihat(q)$, along with \eqref{eq:g-convex} and \eqref{eq:expec-g-upper}, we have 
\begin{equation*}
G(|\ione| + \expec[X] + 1) \leq \expec[G(|\ione| + X+ 1] \leq r(j) \leq r(\uihat(q)).
\end{equation*}
By definition of $\uihat(q)$ we have $G(\uihat(q)) \geq r(\uihat(q))$. In the case $G(\uihat(q)) > r(\uihat(q))$, we have $G(\uihat(q))  > G(|\ione| + \expec[X] + 1)$ hence $|\ione| + \expec[X] + 1> \uihat(q)$ since $G$ is decreasing. Note when $G(\uihat(q)) > r(\uihat(q))$, $\lihat(q) = \uihat(q)$, so we have  $|\ione| + \expec[X] \geq \lihat(q)$. On the other hand, if $G(\uihat(q)) = r(\uihat(q))$, we have $\lihat(q) < \uihat(q)$ so $|\ione| + \expec[X] + 1\geq \uihat(q) > \lihat(q)$ and we also get $|\ione| + \expec[X] \geq \lihat(q)$. 
\end{proof}

By Lemma~\ref{lem:threshold-larger-than-optimal}, $W(i)$ is decreasing on $i \geq \lihat(q)$. Since $|\ione|  + \expec[X] \geq \lihat(q)$ we have 
$W(q, |\ione| + \expec[X]) \leq W(q, \lihat(q))$. Along with \eqref{eq:sender-utility-general}, \eqref{eq:jensen-uhat} and Lemma~\ref{lem:expec-move-larger-itilq} we have
$$W(q,p) \leq \expec[W(q, |\ione| + X)] \leq W(q, \ione| + \expec[X]) \leq W(q, \lihat(q)), $$
completing the proof.

\end{proof}

\section{Upper Bound of \texorpdfstring{$\mu(1)$}{TEXT}}
\label{ap:mu1-upper}

In this section, we compute the upper bound $r(i^*+1)/F(i^*+1)$ for $\mu(1)$ given in Proposition~\ref{thm:private-condition}, which ensures recommending the agents following the social optimal strategy profile is persuasive, for several typical resource sharing functions and cost structures.

Specifically, we consider cost function $F(i) =1/i^\alpha$ for $\alpha=0.2, 0.4, 0.6, 0.8$, with different $\alpha$ controlling the curvature of $F$ and representing different resource sharing scenarios. We consider 3 different cost structures: constant costs , linear costs, and quadratic costs. We fix the total number of agents $N=20$ since $i^*$ does not depend on $N$. The result is given in Table \ref{table:table-mu1}.

\begin{table}[ht]
\begin{subtable}[h]{\textwidth}
\centering
\begin{tabular}{|c|c|c|c|c|c|c|c|c|c|c|}
  \hline
  \diagbox[width=3em]{$\alpha$}{$r$} & $0.1$ & $0.2$  & $0.3$  & $0.4$ & $0.5$ & $0.6$ & $0.7$ & $0.8$ & $0.9$ & $1.0$  \\
  \hline
 $0.2$ & $1$ & $1$  & $1$  & $1$ & $1$ & $1$ & $1$ & $1$ & $0.811$ & $0.808$  \\
  \hline
$0.4$ & $1$ & $1$  & $1$  & $0.621$ & $0.628$ & $0.653$ & $0.666$ & $0.696$ & $0.698$ & $0.776$  \\
  \hline
$0.6$  & $1$ & $0.422$  & $0.44$  & $0.459$ & $0.483$ & $0.58$ & $0.531$ & $0.606$ & $0.682$ & $0.758$  \\
  \hline
  $0.8$  & $0.237$ & $0.241$  & $0.261$  & $0.348$ & $0.435$ & $0.522$ & $0.609$ & $0.696$ & $0.783$ & $0.871$  \\
  \hline
\end{tabular}
\caption{Resource sharing function $F(i) = 1/i^\alpha$, with constant costs $r(i)=0.5r$.}
\end{subtable}

\vspace*{0.5cm}
\begin{subtable}[h]{\textwidth}
\centering
\begin{tabular}{|c|c|c|c|c|c|c|c|c|c|c|}
  \hline
  \diagbox[width=3em]{$\alpha$}{$r$} & $0.1$ & $0.2$  & $0.3$  & $0.4$ & $0.5$ & $0.6$ & $0.7$ & $0.8$ & $0.9$ & $1.0$  \\
  \hline
 $0.2$ & $1$ & $1$  & $0.836$  & $0.869$ & $0.888$ & $0.838$ & $0.849$ & $0.826$ & $0.93$ & $0.859$  \\
  \hline
$0.4$ & $0.617$ & $0.648$  & $0.65$  & $0.735$ & $0.762$ & $0.737$ & $0.666$ & $0.761$ & $0.857$ & $0.696$  \\
  \hline
$0.6$  & $0.464$ & $0.45$  & $0.527$  & $0.525$ & $0.459$ & $0.551$ & $0.643$ & $0.464$ & $0.521$ & $0.58$  \\
  \hline
  $0.8$  & $0.252$ & $0.243$  & $0.364$  & $0.289$ & $0.361$ & $0.433$ & $0.506$ & $0.279$ & $0.313$ & $0.348$  \\
  \hline
\end{tabular}
\caption{Resource sharing function $F(i) = 1/i^\alpha$, with linear  costs $r(i)=0.1ri$.}
\end{subtable}

\vspace*{0.5cm}
\begin{subtable}[h]{\textwidth}
\centering
\begin{tabular}{|c|c|c|c|c|c|c|c|c|c|c|}
  \hline
  \diagbox[width=3em]{$\alpha$}{$r$} & $0.1$ & $0.2$  & $0.3$  & $0.4$ & $0.5$ & $0.6$ & $0.7$ & $0.8$ & $0.9$ & $1.0$  \\
  \hline
 $0.2$ & $0.891$ & $0.947$  & $0.951$  & $1$ & $0.97$ & $0.868$ & $1$ & $0.824$ & $0.927$ & $1$  \\
  \hline
$0.4$ & $0.631$ & $0.78$  & $0.64$  & $0.854$ & $0.737$ & $0.885$ & $0.666$ & $0.761$ & $0.857$ & $0.952$  \\
  \hline
$0.6$  & $0.446$ & $0.63$  & $0.633$  & $0.525$ & $0.657$ & $0.441$ & $0.515$ & $0.588$ & $0.662$ & $0.735$  \\
  \hline
  $0.8$  & $0.302$ & $0.362$  & $0.291$  & $0.388$ & $0.485$ & $0.26$ & $0.303$ & $0.347$ & $0.39$ & $0.433$  \\
  \hline
\end{tabular}
\caption{Resource sharing function $F(i) = 1/i^\alpha$, with quadratic costs $r(i)=0.02ri^2$.}
\end{subtable}

 \caption{The upper bound $r(i^*+1)/F(i^*+1)$ of $\mu(1)$ for the private signaling mechanism that recommends every agent to stay when $\theta=0$ and recommends the first $i^*$ agents to move when $\theta=1$ to be optimal, given by Proposition~\ref{thm:private-condition}, for different resource sharing functions and cost structures. }\label{table:table-mu1}
\end{table}

\newpage

\section{Additional Computational Results}
\label{ap:numeric-mu}

In this section, we present computational results on how much social welfare can be generated by the optimal private and public signaling mechanism under different priors. We consider three different resource sharing function $F(i)=1/i^\alpha$ for $\alpha=0.2, 0.5, 0.9$; and three different cost structures: constant costs where $r(i)=0.5r$, linear costs where $r(i)=0.1ri$ and quadratic costs where $r(i)=0.02ri^2$, and $r$ is the cost coefficient. The results are given in Figure~\ref{fig:mu-numeric}.

\begin{figure}[h!]
     \centering
     \begin{subfigure}[b]{\textwidth}
         \centering
         \includegraphics[width=\textwidth]{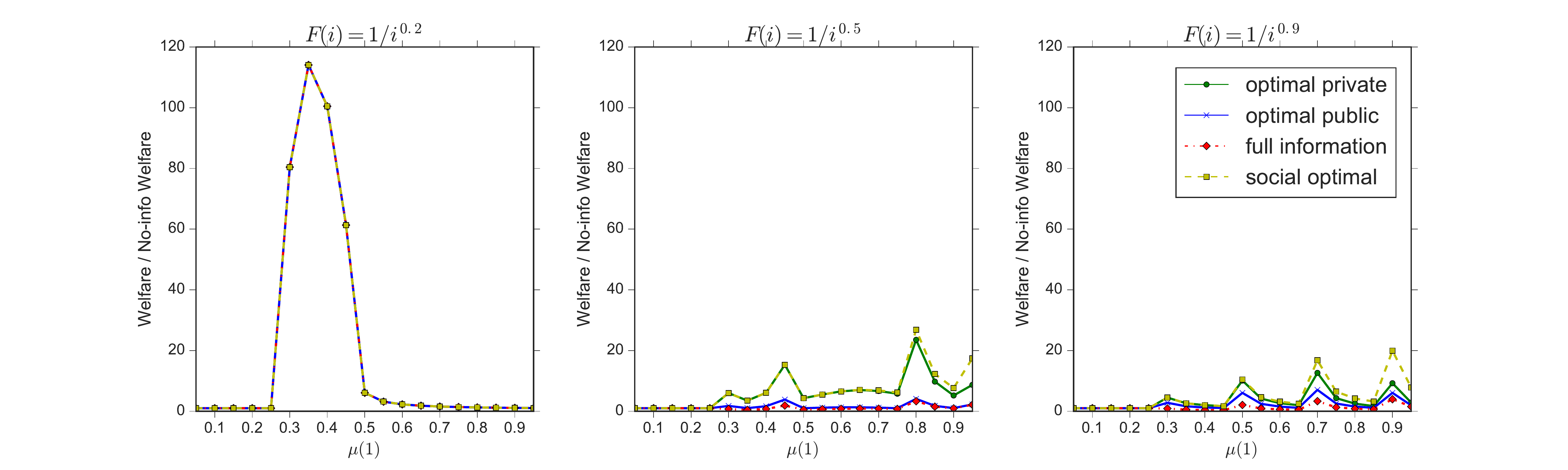}
         \caption{Constant costs: $r(i) = 0.5r$.}
         \label{fig:ratio-constant}
     \end{subfigure}
     \begin{subfigure}[b]{\textwidth}
         \centering
         \includegraphics[width=\textwidth]{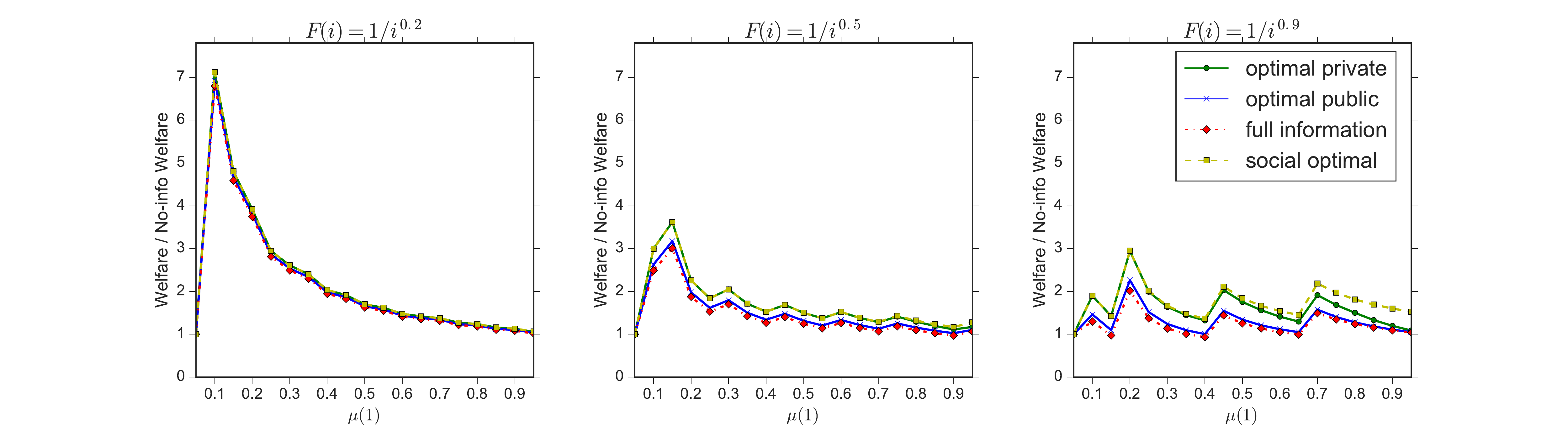}
         \caption{Linear costs: $r(i)=0.1ri$.}
         \label{fig:ratio-linear}
     \end{subfigure}
     \begin{subfigure}[b]{\textwidth}
         \centering
         \includegraphics[width=\textwidth]{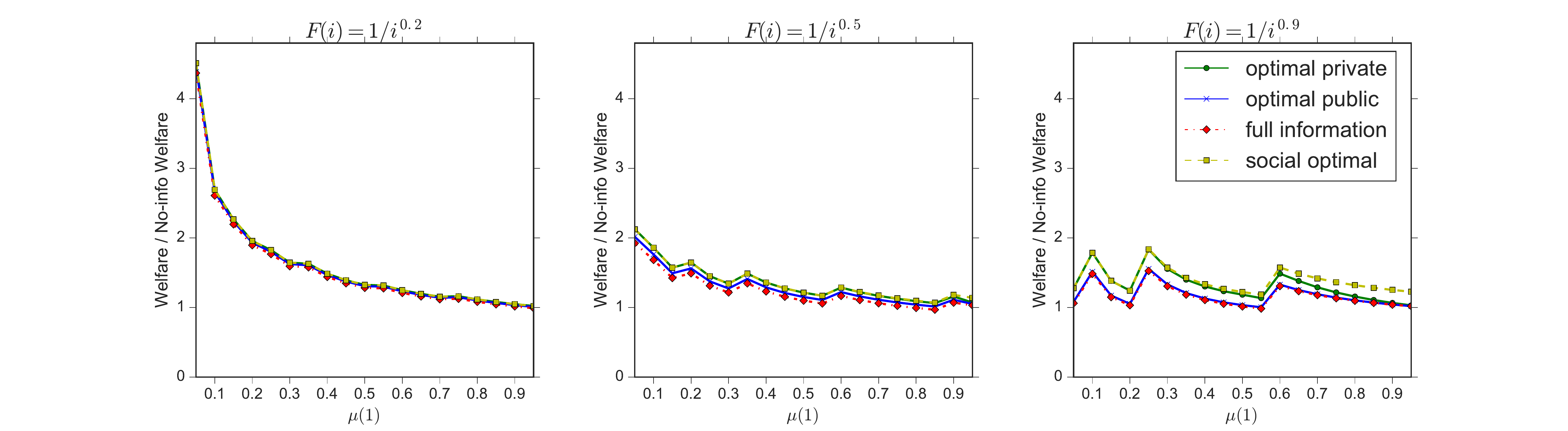}
         \caption{Quadratic costs: $r(i)=0.02ri^2$.}
         \label{fig:ratio-quad}
     \end{subfigure}
       \caption{Social welfare of the optimal signaling mechanisms and the benchmarks, under different prior beliefs, for different resource sharing and cost functions. In all experiments, $N=20$.}
  \label{fig:mu-numeric}
\end{figure}

\end{document}